\newcommand{\Sch}{Schr\"odinger }
\newcommand{\baa}{\begin{align*}}
\newcommand{\eaa}{\end{align*}}
\newcommand{\bea}{\begin{eqnarray*} }
\newcommand{\eea}{\end{eqnarray*} }
\newcommand{\beq}{\begin{equation} }
\newcommand{\eeq}{\end{equation} }
\newcommand{\bp}{\begin{proposition}}
\newcommand{\ep}{\end{proposition}}
\newcommand{\bt}{\begin{theorem}}
\newcommand{\et}{\end{theorem}}
\newcommand{\bpf}{\begin{proof}}
\newcommand{\epf}{\end{proof}}
\newcommand{\bl}{\begin{lemma}}
\newcommand{\el}{\end{lemma}}
\newcommand{\bc}{\begin{cor}}
\newcommand{\ec}{\end{cor}}
\newcommand{\bd}{\begin{definition}}
\newcommand{\ed}{\end{definition}}
\newcommand{\ihbar}{{\frac{i}{h}}}
\def\del{\partial}
\newcommand{\twiddle}[1]{\ensuremath{\widetilde{#1}}}
\newcommand{\be}{\begin{equation} }
\newcommand{\ee}{\end{equation} }
\newcommand{\bee}{\begin{eqnarray} }
\newcommand{\eee}{\end{eqnarray} }
\newcommand{\gives}{\ensuremath{\rightarrow}}
\newcommand{\EE}[1]{\mathbb E}
\newcommand{\abs}[1]{\left\lvert #1 \right\rvert}
\newcommand{\norm}[1]{\left\lVert#1\right\rVert}
\newcommand{\lr}[1]{\ensuremath{\left(#1\right)}}
\renewcommand{\Re}{\ensuremath{\mathrm{Re} \ }}
\renewcommand{\Im}{\ensuremath{\mathrm{Im} \ }}
\newcommand{\dell}{\ensuremath{\partial}}
\newcommand{\twomat}[4]{\ensuremath{ \left(\begin{array}{cc} #1 & #2 \\
#3 & #4 \end{array}\right)}}
\renewcommand{\Re}{{\operatorname{Re}\,}}
\renewcommand{\Im}{{\operatorname{Im}\,}}
\renewcommand{\epsilon}{\varepsilon}
\newcommand{\wt}{\widetilde}
\newcommand{\R}{{\mathbb R}}
\newcommand{\half}{{\textstyle \frac 12}}
\renewcommand{\phi}{\varphi}
\newcommand{\scal}{\mathcal{S}}
\newcommand{\ucal}{\mathcal{U}}
\newcommand{\wcal}{\mathcal{W}}
\DeclareMathOperator{\Ai}{Ai}
\newtheorem{theo}{{\sc Theorem}}[section]
\newtheorem{lem}[theo]{{\sc Lemma}}
\newtheorem{prop}[theo]{{\sc Proposition}}
\newtheorem{defin}[theo]{{\sc Definition}}
\newtheorem{cor}[theo]{{\sc Corollary}}
\newtheorem{Thm}[theo]{{\sc Theorem}}
\newtheorem{proposition}[theo]{{\sc Proposition}}
\newtheorem{lemma}[theo]{{\sc Lemma}}
\newenvironment{rem}{\medskip\noindent{\it Remark:\/} }{\medskip}
\DeclareMathOperator{\Id}{Id}
\begin{document}

\title{Scaling asymptotics of spectral Wigner functions }

\address{Department of Mathematics, Texas A\& M University, College Station, TX
77845, USA}
\email[B. Hanin]{bhanin@princeton.edu}
\address{Department of Operations Research and Financial Engineering, Princeton  University, Princeton, NJ, 
08540, USA}
\email[S. Zelditch]{zelditch@math.northwestern.edu}

\author{Boris Hanin, Steve Zelditch  }
\thanks{S.Z. was partially supported by NSF grant DMS-1810747.} \thanks{B.H. was funded by NSF CAREER grant DMS-2143754
as well as NSF grants DMS-1855684, DMS-2133806 and an ONR MURI on Foundations of Deep Learning.}

\maketitle

\begin{abstract} We prove that smooth Wigner-Weyl spectral sums at an energy level $E$ exhibit Airy
scaling asymptotics across the classical energy surface $\Sigma_E$. This was proved earlier by the authors
for the isotropic harmonic oscillator and the proof is extended in this article to all quantum Hamiltonians
$-\hbar^2 \Delta + V$ where $V$ is a confining potential with at most quadratic growth at infinity. The main 
tools are the Herman-Kluk initial value  parametrix for the propagator and the Chester-Friedman-Ursell normal form for complex phases with
a one-dimensional cubic degeneracy. This gives a rigorous account of Airy scaling asymptotics of  spectral Wigner 
distributions of M.V. Berry, A. Ozorio de Almeida and other physicists.

\end{abstract}


This article is thus concerned with  spectral  Wigner functions for 
 a Schr\"{o}dinger operator
\begin{equation} \label{Hh} \widehat{H}_\hbar = -\frac{\hbar^2}{2}\Delta_{\R^d} +
V(x),\end{equation}
where $V\in C^\infty(\R^d)$ is a real-valued potential that satisfies $V(x) \to \infty$ as $||x|| \to \infty$, and is at most of quadratic growth at infinity,
\begin{equation} \label{SUBQ} \forall \gamma\in \mathbb N^{d}\text{ s.t. }\abs{\gamma}\geq 2~~~\exists C_\gamma>0,\,\, \sup_x \abs{\dell_x^\gamma V(x)}<C_\gamma.\end{equation}
Let $\{\psi_j(\hbar)\}$ be a complete orthonormal basis for $L^2(\R^d,dx)$ consisting of  eigenfunctions of $\widehat{H}_{\hbar}$:
$$\widehat{H}_{\hbar} \psi_j (\hbar, x) = E_j(\hbar)  \psi_j(\hbar, x),$$  
and let \begin{equation} W_{\psi_j, \psi_j}(x,\xi) =
(2 \pi\hbar)^{-d} \int_{\R^d}  \psi_j(\hbar, x+\frac{v}{2})  \psi_j(\hbar, x-\frac{v}{2} ) e^{-\frac{i}{\hbar} v \xi} \frac{dv}{(2\pi h)^d}. \label{E:Wignerefn}
\end{equation} 
  be the  Wigner functions of the individual  eigenstates. The Weyl-Wigner spectral functions for an energy interval $I_{\hbar} (E)= [E - a, E + b]$ centered 
  at $E$  are defined by,
\begin{equation} \label{WEYLWIGNER} W_{\hbar, I_{\hbar} } (x, \xi): = 
  \sum_{j: E_j(\hbar) \in I_{\hbar} (E) } W_{\psi_j, \psi_j}(x,\xi). \end{equation}
In the recent articles \cite{HZ20, HZb}, the authors obtained scaling asymptotics of Airy type for \eqref{WEYLWIGNER} around the energy 
surface \begin{equation} \label{SIGMAE} \Sigma_E = \{(x, \xi): H(x, \xi) = E\}, \;\; H(x, \xi) = ||\xi||^2 + V(x) \end{equation}
in the special case of the isotropic Harmonic oscillator where $V(x) = ||x||^2$. The asymptotics involve
two types of localization: (i) spectral localization to the interval $I_{\hbar}(E)$ and (ii) phase space localization,  where
$(x, \xi)$ is localized either near $\Sigma_E$ or at some prescribed distance from it, either in the energy ball, 
$$B_E  = \{(x, \xi): H(x, \xi) \leq E \}$$
or outside the energy surface. 
The purpose of this article is to present one generalization of
the Airy scaling results to all Schr\"odinger operators \eqref{Hh} for which $H$ is strictly convex, hence
\begin{equation} \label{HYP} B_E \; \rm{is\; strictly\; convex} \;  \end{equation} and to relate the scaling results to statements in M.V. Berry's article
\cite{Ber89} as well as to the related article \cite{O98} of Ozorio de Almeida (see also \cite{TL}).

The result we generalize here is the following \cite[Theorem 1.4]{HZ20}. Let $E_N(\hbar) = \hbar (N + \frac{d}{2})$ be the $N$th {\it distinct}
 eigenvalue of the isotropic harmonic oscillator (see Section \ref{ISOHO} for notation and background) and define  $\hbar = \hbar_N(E):=\frac{E}{N+\frac{d}{2}},$  so that   $E_N(\hbar) = E$. Let  $\Pi_{\hbar, E_N(\hbar)}$ be the orthogonal projection onto the corresponding eigenspace $V_N(\hbar)$ and let $ W_{\hbar, E_N(\hbar)}(x, \xi)$ be its Wigner distribution (in the notation of \eqref{WEYLWIGNER} this means we choose $a=b=0$).
  Then, with $s =
\frac{H(x, \xi)}{E}$,  and  where  for simplicity we assume that  $s =  \frac{H(x, \xi)}{E} \in (0,1]$, it follows from \cite[(50)]{HZ20} that, 
\begin{align}
\label{WAIRY} W_{\hbar, E_N(\hbar)}(x, \xi)&=\frac{2}{(2\pi \hbar)^d} \left[\hbar_E^{1/3}\Ai\lr{\hbar_E^{-2/3}B^2\lr{s}}\lr{1+O\lr{(s-1)^{2/3}}}+\epsilon_1(h_E,s)\right]
\end{align}
where we've sest
\begin{equation}\label{BFORMULA} 
B(s) = i (3 \beta(s)/2)^{1/3}, \qquad 
\beta(s) := \half [\cos^{-1} \sqrt{s} - \sqrt{s - s^2}].
\end{equation} 
Note that in  \cite[(50)]{HZ20} we use the
large parameter $\nu = \frac{2}{\hbar_E}$  and that the argument of the Airy function there is $\nu^{\frac{2}{3}} B^2(s)$; this introduces
some additional universal constants in the asymptotics. Note also that since we've restricted for simplicity to $s\leq 1$, the formula \eqref{BFORMULA} for $B(s)$ is pure imaginary, so that its square is negative and the Airy function is evaluated in its oscillatory region.
In Section \ref{ISOHO} we interpret the formulae geometrically and simplify \eqref{WAIRY} when $(x, \xi)$ is 
very close to $\Sigma_E$.

 The present article generalizes the scaling results \eqref{WAIRY}  for the more general Hamiltonians \eqref{Hh}. There are both sharp and
 smooth versions of the scaling results, depending on whether we sum over an energy interval as in \eqref{WEYLWIGNER} or
 whether we use a smooth cutoff function $f$ to define  general smoothed Weyl-Wigner sums,
\begin{equation} \label{Whfdelta} \begin{array}{lll}
W_{\hbar, f, \delta,E}(x, \xi): =  \sum_{j=1}^{\infty}  f\lr{\hbar^{-\delta}(E- E_j(\hbar))}W_{ \psi_j, \psi_j}(x, \xi)  \end{array}
\end{equation} 
where $f \in \scal(\R)$ is a Schwartz function with $\hat{f} \in C_0^{\infty}(\R)$. In this article we only consider $\delta =1$ and we concentrate on the behavior of \eqref{Whfdelta} in a thin $\hbar^{1/3}$ neighborhood of $\Sigma_E$.
In 
Section \ref{OTHER} we review other scaling results in \cite{HZ20,HZb} where  $\delta =0,  \frac{2}{3}$, i.e. where we sum over
larger spectral intervals.

To state our first result about $W_{\hbar, f, 1, E}(x, \xi)$, we need some notation. Let us denote the Hamilton vector field of $H : T^*\R^d \to \R$ by $\Xi_H$, and the  Hamiltonian flow of $\Xi_H$ by $\Phi^t$. The classical trajectory  with initial data $(q,p)$ is denoted  by
\begin{equation} \label{PHIt} \Phi^t(q, p)  =(q_t,p_t),
\end{equation}
and we will write
\begin{equation}\label{E:Mt-def-intro}
    M_t(q,p):=D_{q,p}\Phi^t(q,p)
\end{equation}
for its Jacobian. It was pointed out in \cite{Ber89, O98} that the asymptotics of $W_{\hbar, f, 1, E}(x, \xi)$ for $(x, \xi) \in B_E$ sufficiently close to $\Sigma_E$ is governed in a sense that we will describe by the midpoint map,  $\Theta^*:\R \times \Sigma_E\gives B_E$ given by 
\begin{equation} \label{PSItDEF}\Theta^*(t, q, p) =  \Theta^t(q, p): = \half (I + \Phi^t)(q,p) = \left(\frac{q+ q_t}{2}, \frac{p+p_t}{2}\right) . \end{equation}
The assumption \eqref{HYP} ensures that \eqref{PSItDEF} takes its values in $B_E$. The asymptotics of $W_{\hbar,f,1,E}(x,\xi)$ depend on the solutions  $(t, q, p)$ of the equations
\begin{equation} \label{SOLVE} \Theta^t(q, p) = (x, \xi), \;\; (q, p) \in \Sigma_E, \;\; (x, \xi) \in B_E, \end{equation}
with $(E, x, \xi)$ are fixed, which we will see in  Proposition \ref{PROPPROPt} below arise as critical point equations for a certain phase function related to $W_{\hbar,f,1,E}$. An important point is that there is an obvious symmetry in the equation \eqref{SOLVE}, namely \begin{equation} \label{SYM} 
\Theta^t(q, p) = \Theta^{-t} (\Phi^t(q, p)), \end{equation}
which fixes the Hamiltonian arc from $(q, p)$ to $(q_t, p_t)$ but reverses the endpoints. It is proved in Section \ref{EFOLDSECT} (see Lemma \ref{IFTLEM}) that under assumption \ref{HYP}, there exists $\epsilon > 0$ so that for $(x, \xi)$ in an
$\epsilon$-tube around $\Sigma_E$ there exist  unique $(t, q,p)$ up to the symmetry \eqref{SYM} satisfying the midpoint equation,
\begin{equation} \label{MIDPOINT}
\Theta(t, q, p) = (x, \xi). 
\end{equation}
Put another way, let us denote by  $ s\in [0,1] \to \gamma_{t, x, \xi}(s) $ the unique Hamiltonian arc (up to the involution \eqref{SYM}) with $|t|$ sufficiently small on $\Sigma_E$  such that 
\begin{equation} \label{gammaDEF} (x, \xi) = \half (\gamma_{t, x, \xi}(0) + \gamma_{t, x, \xi}(t)).
\end{equation}
 In Lemma \ref{ODDLEM} 
it is proved that, there exists $\epsilon_0 > 0$ so that if  $(x, \xi)$ is $\epsilon_0$-close to $\Sigma_E$, then there exist exactly two critical times $t = t_{\pm}(E, x, \xi) \in (-\epsilon_0, \epsilon_0)$ (see Proposition \ref{PROPPROP})  of the form $t_{\pm} = \pm t$ such that $\Theta^t(q, p) = (x, \xi), H(q, p) = E. $ 
For ease of future reference, we make the following

\begin{defin} \label{tDEF} We denote by  $t=t_{\pm} (E,x,\xi)$ with $t_{-} (E, x, \xi) = - t_+(E, x, \xi)$ the critical times where $\Theta^t(q, p) = (x, \xi), H(q, p) = E. $ 
\end{defin}

 A direct computation provided in \eqref{dtE} shows that $\frac{dt}{dE}(E,x,\xi)$ is well-defined for each $E>0$ and $(x,\xi)$ in a sufficiently small tubular neighborhood of $\Sigma_E$. With this notation, we are now ready to present our first result.

\begin{Thm}\label{T:SMOOTH-Prop}
Let $\widehat{H}_\hbar$ be a Hamiltonian as in \eqref{Hh} satisfying the assumptions \eqref{SUBQ} and \eqref{HYP}. Further, fix $f \in \scal(\R)$ such that
\[
\hat{f} \in C_0^{\infty},\qquad  \mathrm{Supp} \hat{f}  \subset (-a, a),\qquad \hat{f}(\tau) = 1\,\,\forall  \tau \in (-a/2, a/2)
\]
where $a < T_{\min}$ is less than the periodic of the minimal periodic orbit on $\Sigma_E$. Then  for $(x, \xi) \in B_E$ such that
$H(x, \xi) = E + O(\hbar^{2/3}), $ there exist $b, \sigma, u_{0, \nu}, u_{1, \nu}  \in C^{\infty}(B_E), $ such that the smoothed spectral Weyl-Wigner functions $W_{\hbar,f,1,E}$ defined in \eqref{Whfdelta}  with $\delta =1$ admit the pointwise semi-classical asymptotic expansion
\begin{equation} \label{AIRY} \begin{array}{lll} W_{\hbar, f, 1,E}(x, \xi) & \simeq &     \hbar^{-d} 
\left( \hbar^{\frac{1}{3}} \mathrm{Ai}(- \hbar^{-\frac{2}{3}} \rho(x, \xi) ) \sum_{\nu=0}^{\infty} u_{0, \nu}(x, \xi) \hbar^{\nu} \right)+ \\&&\\
& + & \hbar^{-d} \left( \hbar^{\frac{2}{3}} \mathrm{Ai}'(-\hbar^{-\frac{2}{3}} \rho(x, \xi)) \sum_{\nu=0}^{\infty} u_{1, \nu}(x, \xi) \hbar^{\nu} \right), \end{array} \end{equation}
where  
\begin{equation} \label{rhoform} \begin{array}{l} \frac{4}{3} \rho^{3/2}=  \half \int_{t_{-}(E, x, \xi)}^{t_{+}(E, x, \xi)} (p - \xi) \cdot dq =    \int_{\beta_{t, x, \xi} } \omega, \end{array} \end{equation} 
where $\beta$ is defined in the next paragraph 
and where the leading coefficient $u_{0,0}$ is given by,
\begin{equation}\label{MATCHa}  u_{0,0} = \sqrt{\pi} \rho(x,\xi)^{1/4} \left| \frac{dt_+}{dE}(E,x,\xi) \det (1 + M_{t_+(E,x,\xi)}(x, \xi))\right|^{-\half},  \end{equation}

\end{Thm}

The integral in \eqref{rhoform} is the integral over the oriented closed curve obtained by connecting the endpoints of $\gamma_{t, x, \xi}$ by the  chord
 \[
 \alpha_{t, x, \xi}(s) = (1 -s) \gamma_{t, x, \xi}(0)  + s \gamma_{t, x, \xi},\qquad s\in [0,1].
 \]
 Since the closed curve bounds the two-dimensional surface
   $\beta_{t, x, \xi}$ consisting 
  of line segments joining $(x, \xi)$
to points of the Hamilton orbit, the integral \eqref{rhoform} equals the oriented area
  $ \int_{\beta} \omega $
  where  $\omega = d p \wedge dq$ is the standard symplectic form of $T^*\R^d$.  In Section \ref{ISOHO} we show
  that $\rho$ in \eqref{AIRY} equals $-B^2(s)$ up to universal constants in \eqref{WAIRY} in the case of the isotropic harmonic oscillator.
  
The  sign conventions in the argument of the Airy function is discussed below Proposition \ref{HORMPROP}. It is clearly
consistent with \eqref{WAIRY}, i.e. for $H(x, \xi) \leq E$ the argument of the Airy function is negative and therefore the asymptotics
are oscillatory.

  Next, we remark that the result of Theorem \ref{T:SMOOTH-Prop} is stated and a proof is sketched  in  \cite{O98,TL}. 
  Our motivation for presenting a rigorous proof is to generalize our earlier result \eqref{WAIRY} and to relate
  these special asymptotics to other types of asymptotics expansions for Wigner functions. The argument of
  the Airy function and the
leading coefficient and argument of the Airy function agree with the calculations in \cite{O98,TL}.  To connect the notations,  in those articles,   the leading coefficient is  given by $A_0(x, \xi, E) S_0^{1/6}$ where $A_0$ in  \cite[(17)]{TL} 
is the coefficient $|\frac{dt}{dE} \det (1 + M)|^{-\half}$.
Also, $S_0^{2/3} $ is the argument of Airy in \cite{TL}. In our notation, which follows that of \cite{HoI}, the  Airy argument is 
$\rho$ and  $\rho^{1/4}=  (\rho^{3/2})^{\frac{1}{6}} =  (S_0)^{1/6}.$

Further, our proof of Theorem \ref{T:SMOOTH-Prop} relies on a stationary phase with fold singularities (see \cite[Page 236]{HoI}) from which we see that when  $\rho (x, \xi) \approx 0$ Theorem  \ref{T:SMOOTH-Prop}  gives an asymptotic expansion in powers of $\hbar^{1/3}$ that remains valid in any region of $T^* \R^d$ in which $\hbar^{-\frac{2}{3}} \rho (x, \xi)$ stays bounded. However,  due to the exponential decay of the Airy functions on the positive axis, $\rho(x, \xi)$ becomes very small when  $\hbar^{-\frac{2}{3}} \rho (x, \xi)$  is positive, and stationary
 phase asymptotics given below in Theorem \ref{T:SMOOTH-Prop2} then become valid.

The Airy scaling asymptotics of  $W_{\hbar, f,1 ,E}(x, \xi) $ for $(x, \xi)$ in the boundary layer around $\Sigma_E$ (i.e. $\hbar^{1/3} $ close
to $\Sigma_E$)  are due to a fold singularity in the map \eqref{PSItDEF}. More precisely, 
in   Section \ref{EFOLDSECT}, it is explained that  the  Airy asymptotics are due to the fact that the  relevant Lagrangian submanifold $\Lambda_{E} \subset T^*(T^* \R^d)$ in the second cotangent bundle (defined by \eqref{LAGDEF2})
has a fold singularity  around $\Sigma_E$,  i.e.  $\Sigma_E$ is a caustic for the Wigner function (see Section \ref{EFOLDSECT}).  The proof of Theorem \ref{T:SMOOTH-Prop} is based on the use of the Herman-Kluk propagator and on   the Airy asymptotics  results of Chester-Friedman-Ursell \cite{CFU} as given 
in \cite[Section 7]{HoI} and \cite{GSt}.
 Theorem \ref{T:SMOOTH-Prop} is a rigorous version of the result stated in \cite[(7.21)]{O98}. It does not seem to appear in \cite{Ber89}.

Finally, we conjecture that when the smooth test function $f$ is replaced by the indicator function of the spectral interval, then  the sharp Wigner-Weyl sum has the asymptotics,
$$\sum_{j: |E_j(\hbar) - E | \leq C \hbar} W_{\psi_j, \psi_j}(x, \xi) = \hbar^{-d} 
\left( \hbar^{\frac{1}{3}} \rm{Ai}(- \hbar^{-\frac{2}{3}} \rho(x, \xi) ) \sum_{\nu=0}^{\infty} u_{0, \nu}(x, \xi) \hbar^{\nu} \right)
+ O(\hbar^{-d + \frac{2}{3}}). $$
Classically such asymptotics for spectral intervals are obtained from the smoothed results \eqref{AIRY} by applying  cosine Tauberian theorems.  But   Tauberian theorems make the hypothesis  that the terms of the sums are non-negative, whereas it is a well-known and important phenomenon that Wigner functions are almost never globally positive. This raises the question whether  
$W_{ \psi_j, \psi_j}(x, \xi) \geq 0$ for $(x, \xi) $ in an $\hbar^{2/3}$ tube around $\Sigma_E$ when the eigenvalue of $\psi_j$ satisfies $|E_j(\hbar) - E | \leq C \hbar$. This is plausible, since asymptotically the Airy function is only evaluated where it is positive; 
moreover, in a weak sense, the Wigner functions tend to a delta function on $\Sigma_E$. But this is far from sufficient to
prove positivity.  A possible source of counter-examples where $W_{\psi_j, \psi_j}(x, \xi)$ can be  negative at some points of $\Sigma_E$ is the Wigner distribution of product eigenstates  of the isotropic
harmonic oscillator, \begin{equation}
\phi_{\alpha,h}(x)=h^{-d/4}p_{\alpha}\lr{x\cdot
h^{-1/2}}e^{-x^2/2h},\label{E:Scaling Relation}
\end{equation}
where $\alpha=\lr{\alpha_1,\ldots, \alpha_d}\geq (0,\ldots,0)$ is a
$d-$dimensional multi-index and
$p_{\alpha}(x)$ is the product $\prod_{j=1}^d p_{\alpha_j}(x_j)$ of the
hermite polynomials $p_k$ (of degree $k$)
in one variable. There are similar products for generic oscillators, where the multiplicity of each eigenvalue equals one.   The Wigner distribution is the product of those of the factors and are given by products in the variables $(x_1, \dots, x_d)$ of one-dimensional  Laguerre polynomials of the variables $x_j^2$.   The product is positive if and only if the number of negative factors is even.  It seems non-obvious whether one can construct products with an odd number of negative factors when $ |E_j(\hbar) - E | \leq C \hbar$ and $\sum_{j=1}^d x_j^2 = E$. Another source of counter-examples could come from Wigner distributions of coherent states along elliptic periodic orbits. In the case of the isotropic harmonic oscillator, the Wigner
distribution is positive near the periodic orbit on $\Sigma_E$ but appears to be negative at some points on $\Sigma_E$ away from the orbit \cite{Lo}. As far as we know, the study of negativity sets of Wigner distributions of eigenfunctions under the above constraints has never been studied.

There are many further scaling asymptotics results for the smoothed and sharp Weyl-Wigner sums \eqref{WEYLWIGNER}-\eqref{Whfdelta}.
In Section \ref{OTHER} we review some model results for the isotropic harmonic oscillator for various types of spectral intervals that should admit generalizations to \eqref{Hh}. We note
that the asymptotics results for the longer spectral intervals involve integrals of the Airy function rather values of the Airy function
as for intervals of order $O(\hbar)$.

\subsection{Pointwise Semi-Classical Asymptotics of Wigner Transform of the Propagator}
As a interesting and useful warm-up to the proof of Theorem \ref{T:SMOOTH-Prop}, we first give a new proof of what is essentially a well-known result from the physics literature (e.g. \cite{Ber89}) giving pointwise semi-classical asymptotics for the Wigner transformation of the \Sch operator $\widehat{H}_\hbar$. Before stating the precise result in Proposition \ref{T:Wig-Prop}, we recall some notation. First, the semi-classical Wigner transform is defined to be  the unitary operator 
$$\wcal_{\hbar}: L^2(\R^d \times \R^d) \to L^2(T^*\R^d),$$ 
taking Hilbert-Schmidt kernels  $K_{\hbar} \in L^2(\R^d \times \R^d)$ to their Wigner distributions. The  semi-classical Wigner transform extends to temperate  (i.e. Schwartz)   distributions $K_{\hbar} \in \scal'(\R^d \times \R^d)$ in the dual space of Schwartz space by,
\begin{equation}
 \wcal_{\hbar}(K_{\hbar})(x, \xi): = \int_{\R^d} K_{\hbar} \left( x+\frac{v}{2}, x-\frac{v}{2} \right) e^{-\frac{i}{\hbar} v \xi} \frac{dv}{(2\pi h)^d}. \label{E:WignerK}
\end{equation} 
In particular, each Wigner function $W_{\psi_j,\psi_j}$ appearing in Theorem \ref{T:SMOOTH-Prop} can be written as the Wigner transform
\[
W_{\psi_j,\psi_j}(x,\xi)=  \wcal_{\hbar}(\Pi_{j,\hbar})(x, \xi),
\]
of the kernel
\[
 \Pi_{j,\hbar}(x,y) :=\psi_j(\hbar, x)\psi_j(\hbar, y) 
\]
of the rank one projection onto the state $\psi_j(\hbar,\cdot)$. One advantage of viewing Wigner functions in this way is that, while Wigner functions $W_{\psi_j,\psi_j}$ are quadratic in $\psi_j$, the Wigner transform $\wcal_{\hbar}(K_\hbar)$ is linear in $K_\hbar$. We will make use of this when proving Theorem \ref{T:SMOOTH-Prop}. But first we consider the  `propagator'   $U_{\hbar}(t) = e^{-\ihbar t \widehat{H}_h}$, i.e. the solution operator of the Cauchy problem for the  \Sch equation
$$i \hbar \frac{\partial}{\partial t} u = \widehat{H}_{\hbar} u.$$
More precisely, we will consider the Schwartz kernel
\begin{equation} \label{PROPDEF} 
U_h(t, x,y) =e^{-\ihbar t \widehat{H}_h}(x,y) = \sum_{j\geq 0} e^{-\ihbar tE_j(\hbar)} \Pi_{j,\hbar}(x,y)
\end{equation}
of $U_{\hbar}(t)$ and its Wigner transform 
\begin{equation} \label{WIGU} 
\ucal_\hbar(t,x,\xi):= \wcal(U_\hbar(t,\cdot,\cdot))(x,\xi)=\int e^{- \ihbar 2 \pi  \xi\cdot v} U_{\hbar} \left(t, x + \frac{v}{2}, x - \frac{v}{2}\right) \frac{dv}{(2 \pi \hbar)^d}.
\end{equation}

Following Berry \cite{Ber89}, our first task will be to obtain, for fixed $(t,x,\xi)$, semi-classical pointwise asymptotics for $\ucal_\hbar(t,x,\xi)$. Specifically, we prove the following formula for pointwise semi-classical asymptotics of the Wigner function of the propagator,  stated  in a heuristic way by M.V. Berry \cite[(21)]{Ber89}   (see also  \cite{TL}  and \cite{O98}.
\begin{prop}\label{T:Wig-Prop}
Let $\widehat{H}_\hbar$ satisfy  \eqref{SUBQ}. Let $(t, x, \xi) $ be fixed and consider the set  of points $(q_j, p_j) \in T^* \R^d$ and
the Hamiltonian arcs $\gamma_j = \gamma_{j; t, x, \xi}$ such that $\gamma_{j; t, x, \xi} (0)= (q_j, p_j)$ and 
\begin{equation} \label{xxi} x=\frac{1}{2}\lr{\gamma_{j;t,x,\xi}(0)+ \gamma_{j;t,x,\xi}(t)},\qquad \xi = \frac{1}{2}\lr{\dot{\gamma}_{j;t,x,\xi}(0) + \dot{\gamma}_{j;t,x,\xi}(t)}.
\end{equation}
Consider the Jacobian
\[
M_j(t,x,\xi):=D_{q,p}\Phi^{t}(q,p)\big|_{(q,p)=(q_j(t,x,\xi),p_j(t,x,\xi))}
\]
of the Hamilton flow $\Phi^t$ of the endpoint  $\gamma_{j;t,x,\xi}(t)=\Phi^{t}(q_j,p_j)$ with respect to initial point $(q,p)$, and 
assume that  
\[
\det(I + M_j(x,\xi)) \not= 0,\qquad \forall j.
\]
Then, 
\eqref{WIGU} admits the pointwise semi-classical asymptotics,
\[\ucal_\hbar(t,x,\xi)=2^d  \sum_j \frac{\exp\lr{i \left[\frac{S_j}{\hbar}+\eta_j\right]}}{\det^{1/2}(I + M_j(x,\xi))} + O(\hbar) ,\]
where  $S_j$ is the classical action and $\eta_j$ is the Maslov index associated to the path $\gamma_{j; t, x, \xi}$ (Section \ref{BACKGROUND}).
\end{prop}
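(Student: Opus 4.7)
The plan is to substitute the Herman--Kluk initial value representation of the propagator into the Wigner transform \eqref{WIGU} and reduce to a single oscillatory integral over phase space to which complex stationary phase applies. Under \eqref{SUBQ}, the Herman--Kluk parametrix expresses the Schwartz kernel of $U_\hbar(t)$ as
\baa
U_\hbar(t,x,y) \;=\; \frac{1}{(2\pi\hbar)^d}\int_{T^*\R^d} a(t,q,p)\,e^{iS(t,q,p)/\hbar}\,g^\hbar_{(q_t,p_t)}(x)\,\overline{g^\hbar_{(q,p)}(y)}\,dq\,dp,
\eaa
modulo an $L^2$-operator error of order $\hbar$, where $g^\hbar_{(q,p)}$ is the normalized Gaussian coherent state centered at $(q,p)$, $S(t,q,p)$ is the classical action along the trajectory with initial data $(q,p)$, and $a(t,q,p)$ is the Herman--Kluk amplitude built from a specific square root of a determinant in the blocks of $M_t(q,p)$.

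Next I would apply $\wcal_\hbar$ term by term. The Wigner transform of the rank one kernel $g^\hbar_{(q_t,p_t)}(x)\,\overline{g^\hbar_{(q,p)}(y)}$ is an explicit oscillatory Gaussian in $(x,\xi)$ centered at the midpoint $\tfrac12((q,p)+\Phi^t(q,p))$: its modulus is a Gaussian of width $\hbar^{1/2}$ concentrating on the midpoint locus, while its phase is proportional to the symplectic area swept between $(q,p)$, $\Phi^t(q,p)$, and $(x,\xi)$, divided by $\hbar$. Substituting into the Herman--Kluk representation produces
\baa
\ucal_\hbar(t,x,\xi) \;=\; C_d\,\hbar^{-2d}\int_{T^*\R^d} a(t,q,p)\,e^{i\Phi_{t,x,\xi}(q,p)/\hbar}\,dq\,dp \;+\; O(\hbar),
\eaa
where the complex phase $\Phi_{t,x,\xi}$ has real part equal to $S(t,q,p)$ plus the symplectic-area term, and positive semidefinite imaginary part vanishing precisely when $\Theta^t(q,p)=(x,\xi)$.

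I would then apply complex stationary phase as in \cite[Ch.~7]{HoI}. Differentiating $S$ via Hamilton's equations, $\partial_q S = p_t\cdot\partial_q q_t - p$ and $\partial_p S = p_t\cdot\partial_p q_t$, one checks that the critical point equations for $\Phi_{t,x,\xi}$ reduce exactly to the midpoint relations \eqref{xxi}, so the critical points are enumerated by the $(q_j,p_j)$ in the statement. Differentiating the midpoint map $\Theta^t$ shows that the complex Hessian at $(q_j,p_j)$ factors as $\tfrac12(I+M_j)$ tensored with a universal nondegenerate block coming from the coherent state Gaussian. The hypothesis $\det(I+M_j)\neq 0$ makes the phase nondegenerate, and complex stationary phase produces, at each critical point, a contribution proportional to $a(t,q_j,p_j)\,\det^{-1/2}(\tfrac12(I+M_j))\,e^{iS_j/\hbar+i\eta_j}$. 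The factor $\tfrac12$ inside the determinant combined with the Gaussian normalization constants absorbs into the $2^d$ in the statement, and summing over $j$ together with the $O(\hbar)$ Herman--Kluk and subprincipal stationary phase errors yields the result.

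The main obstacle is the algebraic identity that the Herman--Kluk amplitude $a(t,q_j,p_j)$, which is itself a $\det^{1/2}$ of a particular linear combination of the blocks $A_j,B_j,C_j,D_j$ of $M_j$, must combine with the Gaussian block of the Hessian to leave the clean factor $\det^{1/2}(I+M_j)$ in the denominator independent of the splitting. Tracking this cancellation requires a careful choice of branch for the square roots, and the Maslov index $\eta_j$ then emerges as the total argument change along the Hamiltonian path $s\mapsto\Phi^s(q_j,p_j)$ from the argument of $\det^{1/2}(\tfrac12(I+M_s))$ together with the amplitude $a$; this is the standard symplectic-linear-algebra computation that underlies the heuristic derivations in \cite{Ber89,O98,TL}.
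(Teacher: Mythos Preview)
Your proposal is essentially correct and follows the same strategy as the paper: insert the Herman--Kluk parametrix into \eqref{WIGU} and apply complex stationary phase, identifying the critical points with the midpoint conditions \eqref{xxi}. The paper keeps the Wigner variable $v$ explicit and does stationary phase in $(q,p,v)\in\R^{3d}$ (Proposition~\ref{PROPPROP} and Lemma~\ref{P:det-Hess}), whereas you propose to first compute the cross-Wigner function of the two coherent states analytically (equivalently, integrate $v$ out first) and then do stationary phase in $(q,p)\in\R^{2d}$. The paper in fact discusses exactly your variant in \S\ref{dv} as an alternative route.

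One point of imprecision: your statement that the Hessian ``factors as $\tfrac12(I+M_j)$ tensored with a universal nondegenerate block coming from the coherent state Gaussian'' is not quite right. The extra block is not universal but depends on the trajectory through the symplectic blocks $A_t,B_t,C_t,D_t$ of $M_t$; in the paper's $3d$-computation (Lemma~\ref{P:det-Hess}) the Hessian determinant comes out as $(-1)^d\det(I+M_t)\det(A_t+D_t+i(B_t-C_t))$, and it is precisely the second, trajectory-dependent factor that cancels against the square of the Herman--Kluk amplitude $a_0=\det^{1/2}(A_t+D_t+i(B_t-C_t))$. You do identify this cancellation correctly as ``the main obstacle,'' so the logic is sound; just be aware that the algebraic work of Lemma~\ref{P:det-Hess} (or its $2d$ analogue after eliminating $v$) is where the actual content of the proof lies, and your sketch defers rather than resolves it.
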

This Proposition is essentially well known in the physics literature (e.g. \cite{Ber89} and \cite{O98}). We prove it in a new way  using the Herman-Kluk parametrix for the propagator (Section \ref{HKSECT}). The basic idea, explained in \S \ref{HKSECT-WIGPROP-PROOF}, of the proof of Proposition \ref{T:Wig-Prop} is to use the Herman-Kluk parameterix \cite{R10} for the kernel of the propagator $U_\hbar(t,x,y)$ to obtain a parametrix for its Wigner distribution of the form
\begin{equation}\label{E:ucal-para-intro}
\ucal_\hbar(t,x,\xi)\sim \int_{\R^{2d}}e^{\ihbar \Psi(q,p,v;t,x,\xi )}a_\hbar(t,q,p)dqdp,    
\end{equation}
where $\Psi$ is an explicit complex-valued phase function depending on the underlying the classical Hamiltonian flow (see \eqref{PSI}) and $a_\hbar$ is a polyhomogeneous symbol (see \eqref{AMP}). This parametrix is valid for long times due to the sub-quadratic assumption \eqref{HYP} on our Hamiltonians. A straight forward stationary phase argument, detailed in \S \ref{HKSECT-WIGPROP-PROOF}, yields Proposition \ref{T:Wig-Prop}.

\subsection{Outline of the Proof of Theorem \ref{T:SMOOTH-Prop}} We will deduce Theorem \ref{T:SMOOTH-Prop} by analyzing the following relation between $W_{\hbar, f,1,E}$ and the Wigner function of the propagator:
\begin{equation}\label{E:WU}
W_{\hbar, f,1,E}(x,\xi) = \int_{\R}\widehat{f}(t)e^{itE/\hbar}\ucal_\hbar(t,x,\xi)\frac{dt}{2\pi},    
\end{equation}
which is a consequence of Fourier inversion and the linearity of the Wigner transform. As discussed in detail in \cite{HZ20,HZb}, $U_\hbar(t,x, y)$ is not locally $L^1$ but is nonetheless well-defined as a tempered distribution in the sense that integral of the form
\begin{equation}\label{ucalhf}  
\ucal_{\hbar, f}(x, \xi) : =\int_{\R} \hat{f}(t)  \ucal_{\hbar}(t, x, \xi)dt 
\end{equation}
are well-defined for $f \in \scal(\R).$  It therefore has a well-defined distributional Wigner transform, making \eqref{E:WU} a valid expression. To analyze \eqref{E:WU}, just as in the derivation of Proposition \ref{T:Wig-Prop}, we start with the Herman-Kluk \cite{R10} parameterix to obtain a parametrix for the Wigner transform of the propagator as in  \eqref{E:ucal-para-intro}. In combination with \eqref{E:WU} this yields an oscillatory integral representation 
\begin{equation}\label{E:Wigner-FIO}
W_{\hbar,f,1,E}(x,\xi)\sim \int_{\R^{3d}}e^{\ihbar (\Psi(q,p,v;t,x,\xi)+tE)}a_\hbar(t,q,p)\frac{dqdpdv}{(2\pi \hbar)^d}\frac{dt}{2\pi}.    
\end{equation}
To obtain the asymptotic expansion in Proposition \ref{T:Wig-Prop} we apply stationary phase. The key point is that the critical points with respect to $p,q,v$ are non-degenerate (as they were in the proof of Proposition \ref{T:Wig-Prop}) but that the critical points with respect to $t$ are degenerate with a fold singularity at the only critical point in the support to $f$, namely at $t=0$.

After integrating out the $p,q,v$ variables, to obtain asymptotics for $W_{\hbar, f, 1, E}(x, \xi)$ we mst integrate in $t$. Theorem \ref{T:SMOOTH-Prop} follows by applying the Chester-Friedman-Ursell (or Malgrange preparation theorem) to the phase of the $dt$ integral resulting from Proposition \ref{T:Wig-Prop}. In Section \ref{MALGRANGE}, we use the fold singularity at $t=0$ to complete the proof of Theorem \ref{T:SMOOTH-Prop}. Leveraging Theorem \ref{T:SMOOTH-Prop2} and Proposition  \ref{T:Wig-Prop}, we will see in Proposition \ref{PROPPROPt} that $W_{\hbar, f,1 ,E}(x, \xi) $ is a  semi-classical Fourier integral kernel with a complex phase. The critical points in $ (t, q,p, v)$ at which the imaginary part of the phase vanishes (and hence the phase integrand is not exponentially small in $\hbar$) are all solutions of
\[
\Theta^t(q,p)=(x,\xi),\qquad  H(q, p) = E.
\]
The singularities of \eqref{E:Wigner-FIO} are determined geometrically by the following result, which is proved at the same time as Proposition \ref{FOLDPROP} below.
\begin{prop}\label{FOLDPROPintro}  The map $\Theta^*$ defined in \eqref{PSItDEF} has a fold singularity along $\{0\} \times \Sigma_E.$ That is, \eqref{PSItDEF} fixes the diagonal when $t=0$ and the kernel
of its derivative is spanned by the vector field $\frac{\partial}{\partial t}$.
\end{prop}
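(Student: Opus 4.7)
The plan is to verify, at each point of $\{0\}\times\Sigma_E$, the standard criteria for a fold: (i) $\Theta^*$ fixes the critical locus, (ii) the linearization has corank one, and (iii) the kernel is transverse to the critical locus, equivalently, the second-order behavior of $\Theta^*$ in the kernel direction is nondegenerate modulo the image. All three reduce to direct computations with $\Theta^*(t,q,p)=\tfrac12((q,p)+\Phi^t(q,p))$ at $t=0$.

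First, since $\Phi^0=\Id$, we have $\Theta^*(0,q,p)=(q,p)$, so the restriction of $\Theta^*$ to $\{0\}\times\Sigma_E$ is the inclusion $\Sigma_E\hookrightarrow B_E$; this is the diagonal-fixing statement. Next I compute $d\Theta^*$ at $(0,q_0,p_0)$: differentiation in $t$ yields $\partial_t\Theta^*|_{t=0}=\tfrac12\Xi_H(q_0,p_0)$, while for any $v\in T_{(q_0,p_0)}\Sigma_E$ the corresponding derivative equals $v$. Since $\Xi_H\in T\Sigma_E$, the image of $d\Theta^*|_{(0,q_0,p_0)}$ is exactly the hyperplane $T_{(q_0,p_0)}\Sigma_E\subset T_{(q_0,p_0)}B_E$, so the corank is one. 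The one-dimensional kernel is spanned by $(\partial_t,-\tfrac12\Xi_H)\in T_0\R\oplus T_{(q_0,p_0)}\Sigma_E$; its $\partial_t$-component is nonzero, so the kernel is transverse to $\{0\}\times T\Sigma_E$, which is the precise sense in which, modulo tangent directions along the critical locus, the kernel is generated by $\partial_t$.

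For the fold transversality I would verify that $\det d\Theta^*(t,q,p)$ vanishes only to first order at $t=0$. Writing $Y$ for a local embedding of $\Sigma_E$ and factoring the Jacobian as $\tfrac12(I+D\Phi^t)\bigl[(I+D\Phi^t)^{-1}\Xi_H(\Phi^t(q,p))\mid dY\bigr]$, the singularity is controlled by the $T_{(q,p)}B_E/T_{(q,p)}\Sigma_E$-component of $(I+D\Phi^t)^{-1}\Xi_H(\Phi^t(q,p))$. A Taylor expansion at $t=0$ gives
\[
(I+D\Phi^t)^{-1}\Xi_H(\Phi^t(q,p))=\tfrac12\Xi_H(q,p)+\tfrac{t}{4}(D\Xi_H\!\cdot\!\Xi_H)(q,p)+O(t^2).
\]
The leading term lies in $T\Sigma_E$ and contributes nothing modulo the image; the $O(t)$ term survives in the quotient precisely when $dH\bigl((D\Xi_H\!\cdot\!\Xi_H)(q,p)\bigr)\neq 0$. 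For $H=\tfrac12\|p\|^2+V(q)$ this pairing equals $-|\nabla V(q)|^2-p^T\nabla^2V(q)\,p$, which, up to the nonzero factor $|\nabla H|$, is the second fundamental form of $\Sigma_E\subset T^*\R^d$ evaluated on $\Xi_H$.

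The main technical point is this last step: converting the strict convexity hypothesis \eqref{HYP} into the nonvanishing of $dH(D\Xi_H\!\cdot\!\Xi_H)$ on $\Sigma_E$. Strict convexity of $B_E$ means precisely that the second fundamental form of $\Sigma_E$ is sign-definite, so this pairing is nowhere zero (noting that $\Xi_H\neq 0$ on $\Sigma_E$ whenever $E$ exceeds the minimum of $H$). Consequently $\det d\Theta^*$ has a simple zero at $t=0$: this identifies the critical locus locally as $\{0\}\times\Sigma_E$, verifies that the kernel is transverse to it, and completes the proof that $\Theta^*$ has a fold along $\{0\}\times\Sigma_E$ in the sense of \cite[Sec.~7]{HoI}, which is exactly the ingredient needed to apply the Chester--Friedman--Ursell normal form to the $t$-integral in the proof of Theorem \ref{T:SMOOTH-Prop}.
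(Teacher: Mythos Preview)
Your proof is correct and follows essentially the same route as the paper. Both arguments compute the differential of $\Theta^*$ at $t=0$, identify the one-dimensional kernel as (a scalar multiple of) $2\partial_t-\Xi_H$, and note its transversality to $\{0\}\times T\Sigma_E$. The paper finds the kernel vector by exhibiting the explicit curve $\alpha(s)=(2s,\Phi^{-s}(q,p))$, which is the infinitesimal version of the involution $\sigma(t,q,p)=(-t,\Phi^t(q,p))$ from \eqref{SYM}, whereas you compute $d\Theta^*$ directly; these are equivalent.

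One genuine difference: you verify the second-order fold condition (simple vanishing of $\det d\Theta^*$) directly on the map $\Theta^*$, reducing it to $dH(D\Xi_H\cdot\Xi_H)=-\mathrm{Hess}\,H(\Xi_H,\Xi_H)\neq 0$ via strict convexity of $B_E$. The paper instead defers this check to Section~\ref{MALGRANGE}, where it appears as the nonvanishing of $\Psi_E'''(0)=-\mathrm{Hess}_{x,\xi}H(J_{0,x,\xi}(0),J_{0,x,\xi}(0))$. Your version is more self-contained and uses only the stated hypothesis \eqref{HYP} (convexity of $B_E$, i.e.\ definiteness of $\mathrm{Hess}\,H|_{T\Sigma_E}$), whereas the paper's phrasing ``by strict convexity of $H$'' invokes the slightly stronger global convexity of $H$; since $\Xi_H\in T\Sigma_E$, your formulation is the sharper one.
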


Proposition \ref{FOLDPROPintro} shows that the Lagrangian  generated by the phase in \eqref{E:Wigner-FIO} has a fold singularity for critical points at $t=0$. To analyze it, we rely on variants of the Malgrange preparation theorem for fold singularities, in the stronger form given in \cite{CFU, L61}, to put the phase into cubic normal form.  See  Proposition \ref{HORMPROP} and Proposition \ref{HORMPROP2}. We note that the latter  statements assume that the phase is real valued, while in  \cite{CFU,L61} the phase and amplitude are assumed to be complex analytic. Neither assumption holds for the phase  of the spectral  Wigner distribution in the generality of this article. For this reason, we first integrate out all but the time $t$ variable in Section \ref{HKSECT} (see in particular Section \ref{ONEDRED}), so that we may use Proposition \ref{HORMPROP}.


\subsection{Contributions to Smooth Wigner Functions Coming from Non-degenerate Critical Times} Although we do not use it in the proof of Theorem \ref{T:SMOOTH-Prop},
we state a second known result on the asymptotics of \eqref{Whfdelta} for which the stationary phase method applies. We only use it to make comparisons with the Airy asymptotics as $(x, \xi)$ moves away from the fold singularity. In the next
Proposition we retain the notation of Proposition \ref{T:Wig-Prop}.

\begin{prop}\label{T:SMOOTH-Prop2}
Let $H_\hbar$ satisfy \eqref{SUBQ}. Fix $(E, x, \xi)$ with $(x, \xi) \in B_E$ and consider all solutions  $(t_j, q_j, p_j)$ of  \eqref{xxi}
for which $(q_j, p_j) \in \Sigma_E$ and $t_j \in \rm{supp} \;\hat{f}$.  Assume that  $0\not \in \mathrm{supp}(\hat{f})$ and that $\frac{dt_j}{dE} \det (1 + M_j(x, t_j(E)) \not= 0$  for all $j$.  Then the smoothed spectral Wigner function  \eqref{Whfdelta}  admits the pointwise semi-classical asymptotics, 
$$W_{\hbar, f, 1,E}(x, \xi): =
 \frac{2^{d+1}}{\sqrt{2 \pi \hbar}} \sum_j \hat{f} (t_j) \left| \frac{dt_j}{dE} \det (1 + M_j(q_j, p_j,  t_j(E))^{-1} \right|^{\half} \cos \left( \frac{S_j(x, \xi, E)}{\hbar} + m_j\right) + O(\hbar^{1/2}). $$ 
 Here, $S_j(x, \xi, E)$ is the action along the trajectory $\gamma_{j; t, x, \xi}$.  
\end{prop}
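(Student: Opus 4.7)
The plan is to combine the Fourier inversion formula \eqref{E:WU} with the pointwise expansion of $\ucal_\hbar$ supplied by Proposition \ref{T:Wig-Prop} and then carry out one additional application of the stationary phase method in the single remaining variable $t$. Because the hypothesis $0\notin\mathrm{supp}(\hat{f})$ excludes the fold point at $t=0$, no normal form reduction of Chester--Friedman--Ursell type is needed; ordinary one-dimensional stationary phase at non-degenerate critical points will suffice, and the $(q,p,v)$-stationary phase reduction already performed in the proof of Proposition \ref{T:Wig-Prop} takes care of the remaining $3d$ variables.

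Substituting the expansion of Proposition \ref{T:Wig-Prop} into \eqref{E:WU} reduces the problem to evaluating, for each branch $j$, an oscillatory integral of the form
\[
I_j(\hbar) \;=\; \int_{\R} e^{i\Phi_j(t,x,\xi,E)/\hbar}\,\frac{\hat{f}(t)\,e^{i\eta_j}}{\det^{1/2}(I+M_j(t,x,\xi))}\,\frac{dt}{2\pi},\qquad \Phi_j := S_j(t,x,\xi) + tE,
\]
modulo a lower-order remainder contributed by the $O(\hbar)$ error in Proposition \ref{T:Wig-Prop}. The critical point equation $\partial_t\Phi_j=0$, combined with the Hamilton--Jacobi identity $\partial_t S_j = -H(q_j(t,x,\xi),p_j(t,x,\xi))$ inherited from the Herman--Kluk phase, identifies critical times with exactly the solutions $t_j\in\mathrm{supp}(\hat{f})$ of the energy constraint $H=E$, namely the critical times of Definition \ref{tDEF} lying in the support of $\hat{f}$.

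The next step is to read off the Hessian at each critical point. Differentiating $H(q_j(t_j(E)),p_j(t_j(E)))=E$ implicitly with respect to $E$ yields
\[
\partial_t^2 \Phi_j(t_j,x,\xi,E) \;=\; -\bigl(\tfrac{dt_j}{dE}(E,x,\xi)\bigr)^{-1},
\]
so that the non-degeneracy hypothesis $\frac{dt_j}{dE}\det(I+M_j)\neq 0$ both ensures the one-dimensional stationary phase formula applies and supplies the amplitude factor $\sqrt{2\pi\hbar\,|dt_j/dE|}$ together with a Maslov jump of $-\tfrac{\pi}{4}\mathrm{sgn}(dt_j/dE)$. Assembling this with the prefactor $2^d/\det^{1/2}(I+M_j)$ already present in $\ucal_\hbar$ produces one complex exponential contribution per critical time. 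The involution symmetry \eqref{SYM}, together with the unitarity identity $U_\hbar(-t)=U_\hbar(t)^*$ (which at the level of Wigner transforms reads $\ucal_\hbar(-t,x,\xi)=\overline{\ucal_\hbar(t,x,\xi)}$), pairs each critical time $t_j$ with $-t_j$ and sends its contribution to the complex conjugate; the two contributions then combine into the real cosine $\cos(S_j/\hbar+m_j)$, producing the overall prefactor $2\cdot 2^d=2^{d+1}$ appearing in the statement.

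The main technical obstacle is the careful bookkeeping of the Maslov index $m_j$: one must assemble the Maslov contribution $\eta_j$ coming from the $(q,p,v)$ stationary phase in Proposition \ref{T:Wig-Prop} with the new $\tfrac{\pi}{4}$ signature correction arising from the $t$-stationary phase, and then verify that the contributions from $\pm t_j$ are genuine complex conjugates so that the sum collapses to a cosine with a single real phase $m_j$. This requires invoking the classical fact that the Maslov index changes sign under time reversal along a Hamiltonian arc, which matches exactly with the conjugation induced by $\ucal_\hbar(-t)=\overline{\ucal_\hbar(t)}$. Once this is in place, the stated remainder follows from the $O(\hbar)$ error in Proposition \ref{T:Wig-Prop} and the usual next-order term in one-dimensional stationary phase applied to the leading part.
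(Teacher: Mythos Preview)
Your proposal is correct and follows essentially the same route the paper lays out. The paper does not write out a self-contained proof of Proposition~\ref{T:SMOOTH-Prop2}, treating it instead as a known result from \cite{Ber89,O98,TL}; but all the ingredients you use are precisely those developed in Proposition~\ref{SMOOTHWEYLPROP} and Section~\ref{HesstINT}: the one-dimensional reduction after integrating out $(q,p,v)$, the identification $\partial_t\Psi_E=-H(\gamma_{t,x,\xi}(0))+E$ giving the energy constraint, and the Hessian formula $\partial_t^2\Psi_E=(dt/dE)^{-1}$ from \eqref{dtE}. With $0\notin\mathrm{supp}(\hat f)$ the fold at $t=0$ is excluded, the Hessian is nonzero by hypothesis, and ordinary stationary phase applies --- exactly as you say.

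One small point of rigor: rather than literally substituting the pointwise expansion of Proposition~\ref{T:Wig-Prop} into the $dt$ integral (which would require uniformity of the $O(\hbar)$ remainder in $t$), it is cleaner to work, as the paper does in Section~\ref{ONEDRED}, with the full Herman--Kluk oscillatory integral \eqref{E:Wigner-FIO-again} and perform stationary phase in $(q,p,v)$ first at the level of the parametrix. This yields the same one-dimensional $dt$ integral with phase $\Psi_E$ and amplitude $|\det(I+M_t)|^{-1/2}$, and the subsequent steps are identical to yours.
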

Under the assumption \eqref{HYP}, there are no solutions $(t_j, q_j, p_j)$  unless $(x, \xi) \in B_E$. Moreover, 
as explained in \cite{Ber89}, 
if the trajectory through $(q, p) $ is  periodic of period $T$, and if $(t, q, p)$ is a solution of \eqref{xxi}, then
$(t + T, q, p)$ is another solution, but the non-degeneracy condition is not satisfied.

\subsection{\label{OTHER}Further  scaling results} The results of M.V. Berry \cite{Ber89} pertain mainly to the the contribution of periodic orbits to the Wigner spectral
asymptotics. It is evident from \eqref{SOLVE} that if $(q, p)$ is a periodic point of period $T$ then one gets further
solutions by replacing $(q, p)$ by $\Phi^T(q, p)$. The fold singularity along $\Sigma_E$ at $t=0$ also occurs
at time $t= T$ at the periodic points of $\Sigma_E$ of period $T$. The order of the asymptotics is higher than it would
be for non-degenerate critical points, and Berry therefore referred to the caustic enhancement of such periodic orbits
as `scars'.  The methods of this article extend to the periodic orbit case with few modifications but for the sake of
brevity we do not include them here. 

In \cite{HZb},  further scaling asymptotics are proved in the case of the isotropic harmonic oscillator. The essential difference to
Theorem \ref{T:SMOOTH-Prop} is that much larger spectral intervals are assumed. We do not generalize these results to 
general Hamiltonians satisfying \eqref{SUBQ} or \eqref{HYP} in this article,  but it is very likely that generalizations do exist. We briefly review
these additional scaling results. 

Instead of spectral intervals of width $\hbar$ we consider intervals of width $\hbar^{2/3}$, e.g. 
$[E - \lambda_- \hbar^{2/3}, E + \lambda_+ \hbar^{2/3} $. In the smoothing asymptotics we let $\delta = \frac{2}{3}$ and consider
$$W_{\hbar, f, \frac{2}{3},E}(x, \xi): =  \sum_{j=1}^{\infty}  f\lr{\hbar^{-\frac{2}{3}}(E- E_j(\hbar))}W_{ \psi_j, \psi_j}(x, \xi).$$  We define the  rescaled variable $u=u(x,\xi)$ centered at the energy surface $\Sigma_E$ by  
\[H(x, \xi)  =E+ u  \lr{\hbar/2E}^{2/3}.\] We then prove that,
$$ W_{2/3,E,\lambda_{\pm}}(x,\xi)=(2\pi\hbar)^{-d}C_E\int_{-\lambda_+}^{-\lambda_-}\Ai\lr{\frac{u}{E}+\lambda C_E} d\lambda + O(\hbar^{-d+1/3-\delta}),\qquad C_E = (E/4)^{1/3}. $$
Instead of getting the value of the Airy function at the scaled parameter $u$, we get an integral over the values due to the larger
spectral interval.

   Furthermore, in   the even larger interval $I_{\hbar} = [0, E]$,
 for any $\epsilon>0$, we obtained the bulk asymptotics, 
\begin{equation} \label{BULKSCALINGCOR} 
W_{\hbar, [0, E]}(x, \xi)  =   \lr{2\pi \hbar}^{-d}  \left[\int_0^{\infty} \Ai\lr{\frac{u}{E}+\lambda} d\lambda +O(\hbar^{1/3-\epsilon}\abs{u}^{1/2})+ O(\abs{u}^{5/2}\hbar^{2/3-\epsilon})\right],
\end{equation}
where the $O$-symbol depends only on $d,\epsilon.$


The articles  \cite{Ber89, O98} use the special (Lorentzian) test function $f_{\epsilon}(E - E_j(\hbar)) = \delta_{\epsilon}(E - E_j(\hbar)): =  -\frac{1}{\pi} \Im (E + i \epsilon - x)^{-1} $, with various choices of $\epsilon$ stated in \cite[Pages 220-221]{Ber89} as ranging from the mean level spacing of order $\hbar^d$ and the semi-classical scaling $\frac{\hbar}{T_{\min}}$ where $T_{\min}$ is the length of the shortest periodic orbit on $\Sigma_E$. The same test functions and energy scales are used in \cite[Section 7]{O98}. The mathematical techniques of this article (along with other spectral asymptotics articles in the mathematics literature) are not valid below the semi-classical scale, and in particular do not give scaling results on the length scale $\delta  = \hbar^{\nu}$ with $\nu > 1$; they do apply on the scale  $\frac{\hbar}{T_{\min}}$, and we prove Theorem \ref{T:SMOOTH-Prop} by using a special case of the  Malgrange preparation theorem due to  Chester-Friedman-Ursell \cite{CFU} to put the phase into normal form; the relevant theorems are   explained in detail in  \cite[Chapter 7]{HoI} and in \cite[Page 439, Page 444]{GSt}. Such asymptotics are used to determine the asymptotics of oscillatory integrals whose phase functions exhibt  fold singularities, such as occur in diffraction theory. In Section \ref{EFOLDSECT} we identify the relevant folding map.

Besides linking the somewhat heuristic asymptotics calculations of \cite{Ber89, O98} to the mathematical literature, in particular making
more precise the scale of the asymptotics,  a novelty of
our presentation 
is to use the Herman-Kluk propagator as discussed by D. Robert \cite{R10} to construct a `parametrix' for \eqref{Whfdelta}. 

\subsection{Acknowledgements} Thanks to Mike Geis and Nick Lohr for many helpful comments that have improved
the exposition.

\section{\label{BACKGROUND} Background on Classical mechanics} In this section, we recall some basic results from classical mechanics and set notation. With $\Phi^t$ the Hamiltonian flow defined in \eqref{PHIt}, we continue to denote its derivative at $(q, p)$ by  
\begin{equation} \label{MtDEF} D_{q, p} \Phi^t: = M_t=\twomat{A_t}{B_t}{C_t}{D_t},\end{equation}
where
\[A_t = \dell_q q_t,\quad B_t = \dell_p q_t, \quad C_t =\dell_q p_t,\quad D_t = \dell_pp_t,\]
with rows corresponding to components and columns to derivatives:
\[\lr{\dell_{q_j}q_t}_k = \lr{A_{t}}_{k,j}.\]
Since $\Phi^t$ is a Hamiltonian flow, we have
\[\Omega = M_t^T \Omega M_t\]
where $\Omega= \twomat{0}{I}{-I}{0}$ is the symplectic form \cite{F} and we therefore have
\begin{align}
  A_tC_t = C_t^TA_t ,\qquad A_t^TD_t - C_t^TB_t = \Id,\qquad B_t^TD_t = D_tB_t\label{E:symplectic-jacobian}.
\end{align}
The action along the $\Phi^t$ orbit with initial value $(q,p)$ is defined by,
\begin{equation} \label{Stqp} S(t,q,p):=\int_0^t (\dot{q}_s\cdot p_s - H(s,q_s, p_s)) ds\end{equation}
We only deal with autonomous Hamiltonians, for which $H(s, q_s, p_s) = H(q_s, p_s)$ and then $H(q_s, p_s)$
is constant along Hamilton orbits and the second term is $t H(q,p)$. We will have occasion to use the following elementary result.

\begin{lemma} \label{DERIVS} We have
  $$\left\{ \begin{array}{l} \dell_q S(t,q,p)= \lr{\dell_q q_t}\cdot p_t - p, \\ \\ 
  \dell_p S(t,q,p)=\lr{\dell_p q_t}\cdot p_t, \\ \\ 
\dell_t S(t,q,p)=\dot{q}_t \cdot p_t - H(q_t, p_t) . \end{array} \right. $$
\end{lemma}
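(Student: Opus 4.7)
The plan is to derive the three formulas from the variational expression for $\delta S$ and the fundamental theorem of calculus. First, I would handle $\dell_t S$: since the integrand in \eqref{Stqp} is a continuous function of $s$, the fundamental theorem of calculus gives
\[
\dell_t S(t,q,p) = \dot q_t \cdot p_t - H(q_t,p_t)
\]
immediately. Note that since $H$ is autonomous, $H(q_t,p_t)=H(q,p)$ is constant in $t$ along the orbit, but the cleaner form uses $H(q_t,p_t)$ as stated.

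The core step is the Poincaré--Cartan variational identity for the action functional on trajectories. I would fix $t$ and regard the orbit $s\mapsto(q_s,p_s)$ as depending smoothly on the initial data $(q,p)$, then compute
\[
\delta S = \int_0^t \bigl[\delta\dot q_s \cdot p_s + \dot q_s\cdot \delta p_s - \dell_q H\cdot \delta q_s - \dell_p H\cdot \delta p_s\bigr]\,ds.
\]
Integrating the first term by parts yields the boundary term $p_s\cdot\delta q_s\big|_0^t$ together with an integrand $-\dot p_s\cdot\delta q_s$. Grouping by $\delta q_s$ and $\delta p_s$, the remaining integrand is $-(\dot p_s+\dell_q H)\cdot \delta q_s +(\dot q_s-\dell_p H)\cdot\delta p_s$, which vanishes along solutions of Hamilton's equations. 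Hence
\[
\delta S = p_t\cdot \delta q_t - p\cdot \delta q,
\]
where I used $\delta q_0 = \delta q$, $\delta p_0 = \delta p$. This is the standard endpoint variation formula.

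Finally, I would specialize $\delta q_t$ in terms of the chosen variation of initial data. At fixed $t$,
\[
\delta q_t = (\dell_q q_t)\,\delta q + (\dell_p q_t)\,\delta p,
\]
so setting $\delta p=0$ gives $\delta S=[(\dell_q q_t)\cdot p_t - p]\cdot \delta q$, yielding the formula for $\dell_q S$, and setting $\delta q=0$ gives $\delta S=[(\dell_p q_t)\cdot p_t]\cdot \delta p$, yielding the formula for $\dell_p S$.

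There is no serious obstacle here: the only thing to watch is the bookkeeping of the transpose in going from the bilinear pairing $p_t\cdot\delta q_t$ to the matrix-times-vector expression $(\dell_q q_t)\cdot p_t$, which matches the convention declared in \eqref{MtDEF}. Smoothness of $(q,p)\mapsto (q_s,p_s)$ needed to justify differentiation under the integral and the variational computation follows from the standard ODE dependence-on-initial-data theorem applied to the Hamilton equations for $H$, which is guaranteed globally in time by \eqref{SUBQ}.
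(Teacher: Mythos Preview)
Your proposal is correct and is essentially the same computation as the paper's proof: both differentiate under the integral, use Hamilton's equations to eliminate the interior terms, and are left with the boundary contribution $p_s\cdot\delta q_s\big|_0^t$. The only cosmetic difference is that you package the $q$- and $p$-derivatives into a single variational identity $\delta S = p_t\cdot\delta q_t - p\cdot\delta q$ via integration by parts and then specialize, whereas the paper treats $\dell_q S$ and $\dell_p S$ separately and recognizes the integrand as a total $s$-derivative directly.
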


\begin{proof}
  Using the equations of motion $\dell_p H = \dot{q}_s$ and $\dell_q H = - \dot{p}_s$, we get
   \begin{align*}
\dell_qS(t,q,p)&=\dell_q\lr{\int_0^t \dot{q}_s\cdot p_s - H(q_s, p_s)ds}\\&= \int_0^t\lr{\dell_q \dot{q}_s\cdot p_s + \dot{q}_s \cdot \dell_q p_s - \dell_q H(q_s,p_s)\dell_q q_s - \dell_p H(q_s,p_s) \dell_q p_s}ds \\
&= \int_0^t \langle \dell_q \dot{q}_s,  p_s  \rangle+ \langle  \dot{q}_s, \dell_q p_s  \rangle+ \langle \dot{p}_s, \dell_q q_s
\rangle - \langle \dot{q}(s),  \dell_q p_s \rangle ds\\ & 
= \int_0^t \langle \dell_q \dot{q}_s,  p_s  \rangle + \langle \dot{p}_s, \dell_q q_s
\rangle  ds 
 = \int_0^t \partial_s \langle \dell_q q_s,   p_s \rangle ds = \langle \dell_q q_s,   p_s \rangle |_0^t,
 \end{align*}
 proving the first statement. Similarly,

  \begin{align*}
\dell_pS(t,q,p)&=\dell_p\lr{\int_0^t \dot{q}_s\cdot p_s - H(q_s, p_s)ds}\\
&=\int_0^t\lr{\dell_p \dot{q}_s\cdot p_s + \dot{q}_s \cdot \dell_p p_s - \dell_q H(q_s,p_s)\dell_pq_s - \dell_p H(q_s,p_s) \dell_p p_s}ds \\ & = 
\int_0^t \left( \dell_p \dot{q}_s\cdot p_s + \dot{q}_s \cdot \dell_p p_s  + \langle \dot{p}_s, \dell_pq_s \rangle - \langle \dot{q}_s,  \dell_p p_s \rangle \right) ds \\ & = 
\int_0^t \left( \dell_p \dot{q}_s\cdot p_s   + \langle \dot{p}_s, \dell_pq_s \rangle \right) ds = \int_0^t \frac{d}{ds}
\langle p_s, \dell_p q_s \rangle ds = \langle p_s, \dell_p q_s \rangle |_0^t, 
  \end{align*}
  proving the second statement. The third statement is obvious.

\end{proof}

\section{\label{HKSECT-WIGPROP-PROOF} Herman-Kluk Parametrix and the Proof of  Proposition \ref{T:Wig-Prop}}
In this section, we prove Proposition \ref{T:Wig-Prop}. For this, we recall in \S \ref{HKSECT} the Herman-Kluk Parametrix for the propagator. Then we use this parameterix in conjunction with stationary phase to complete the proof of Proposition \ref{T:Wig-Prop} in \S \ref{S:Wig-Prop-Proof}. We then provide in \S \ref{dv} and \ref{S:Rem-Lag} two conceptual remarks on the proof that will be useful for guiding our subsequent proof of Theorem \ref{T:SMOOTH-Prop}. 

\subsection{\label{HKSECT} Herman-Kluk Parametrix for the Wigner Transform of the Propagator} As explained in \cite{R10, RS} (see \cite[(1.7)]{R10}), for subquadratic Hamiltonians satisfying \ref{SUBQ}, one can construct  to construct a long-time parametrix 
\begin{equation} \label{HKPROP} U_\hbar(t,x,y) \sim \int_{\R^{2d}}e^{\frac{i}{\hbar}\Psi_{HK}(t,q,p;x,y)}a_\hbar(t,q,p)dqdp\end{equation}
for the propagator $U_\hbar(t,x,y)$ (see \eqref{PROPDEF}) with the complex Herman-Kluk phase,
\begin{equation} \label{PSIHK} \Psi_{HK}(t, q,p; x,y):=S(t,q,p)+p_t\lr{x-q_t} - p(y-q)+\frac{i}{2}\lr{\abs{x-q_t}^2 + \abs{y-q}^2}.
\end{equation}
Here, $S(t,q,p)$ is the action given by \eqref{Stqp}. The phase $\Psi_{HK}$ is called a positive complex phase since its imaginary part is positive. Moreover, $a\sim \sum_j \hbar^j a_j(q,p)$ is a polyhomogeneous symbol with
\begin{equation}\label{AMP} a_0=\lr{\det\lr{A_t + D_t + i(B_t-C_t)}}^{1/2}\exp(-itH(q,p))\end{equation}
where we've written as in \eqref{MtDEF} Section \ref{BACKGROUND}
\[A_t=\frac{\dell q_t}{\dell q},\quad B_t=\frac{\dell q_t}{\dell p},\quad C_t=\frac{\dell p_t}{\dell q},\quad D_t=\frac{\dell p_t}{\dell p}.\]
The parametrix is valid for all times $t<\log (1/\hbar)$ and the symbol estimates are uniform over compact intervals in $t.$ The parameterix \eqref{HKPROP} for the propagator $U_\hbar(t,x,y)$ reveals that its Wigner transform $\ucal_\hbar(t,x,\xi)$ is an oscillatory integral 
with positive complex phase. Namely, 
\begin{equation}\label{E:Ucal-PSI}
\ucal_\hbar(t,x,\xi) \sim (2\pi \hbar)^{-d}\int_{\R^{3d}} e^{\frac{i}{\hbar}\Psi(t,x,\xi;q,p,v)} a_\hbar(t,q,p)\frac{dqdpdv}{(2\pi \hbar)^d},    
\end{equation}
where
\begin{align}
    \label{PSI} \Psi(t, x, \xi; q, p, v) &:=S_t+p_t\lr{x+\frac{v}{2}-q_t}-p\lr{x-\frac{v}{2}-q}-v\xi\\
    \notag &\qquad +\frac{i}{2}\lr{\abs{x+\frac{v}{2}-q_t}^2+\abs{x-\frac{v}{2}-q}^2}. 
\end{align} 
The integral is of course exponentially small in $\hbar$ unless the imaginary part vanishes,  i.e. 
\begin{equation} \label{IMPART0} 
\Im \Psi = \abs{x+\frac{v}{2}-q_t}^2+\abs{x-\frac{v}{2}-q}^2=0. 
\end{equation}
\subsection{Proof of Proposition \ref{T:Wig-Prop}}\label{S:Wig-Prop-Proof}

We are now ready to complete the proof of Proposition \ref{T:Wig-Prop}. In the notation of Section \ref{BACKGROUND}, the main technical result is the following. 

\begin{proposition}\label{PROPPROP}
  The Wigner function $\ucal_\hbar(t,x,\xi)$ of the propagator is a semi-classical Fourier integral kernel with a positive complex phase. The dominant critical points (in $ (q,p, v)$) at which \eqref{IMPART0} holds are solutions of
\[
x = \frac{q+q_t}{2},\quad \xi = \frac{p+p_t}{2},\quad v= q_t-q.
\]
Moreover, there exist $\epsilon,\epsilon_0>0$ such that for $(x, \xi)$ in an $\epsilon$-neighborhood of $\Sigma_E$ and $t \in (-\epsilon_0, \epsilon_0)$, critical points are uniformly non-degenerate. The critical value of the phase at a critical point $(q, p,v) $ is given by:
\begin{equation} \label{PSIc} 
\Psi_c(t, x, \xi) := \int_{\gamma_{t, x, \xi}} (pdq - H ds)  -\langle (q_t - q),  \xi \rangle, 
\end{equation}
where $\gamma_{t, x, \xi}$ was defined in \eqref{gammaDEF} as  the unique Hamiltonian arc (up to the involution \eqref{SYM}) with $|t|$ small on $\Sigma_E$  such that $ (x, \xi) = \half (\gamma_{t, x, \xi}(0) + \gamma_{t, x, \xi}(t))$. 
\end{proposition}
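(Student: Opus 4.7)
The proof rests on examining the oscillatory integral representation \eqref{E:Ucal-PSI} with the Herman--Kluk phase \eqref{PSI}. That $\ucal_\hbar$ is a semi-classical FIO kernel with positive complex phase is immediate once one observes that
\[
\Im\Psi(t,x,\xi;q,p,v) \;=\; \tfrac{1}{2}\bigl(|x+v/2 - q_t|^2 + |x-v/2 - q|^2\bigr) \;\geq\; 0.
\]
The dominant critical points are those where the amplitude is not exponentially suppressed, i.e.\ the points where $\Im\Psi = 0$; since $\Im\Psi$ is a sum of two squares, this locus is exactly $q = x - v/2$ and $q_t = x + v/2$, equivalently $v = q_t - q$ and $x = (q+q_t)/2$. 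This already supplies two of the three critical-point equations claimed in the proposition.

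The remaining equation $\xi = (p+p_t)/2$ will come from $\partial_v\Psi = 0$. My plan is to differentiate $\Psi$ in $q,p,v$ and use Lemma \ref{DERIVS} to verify that, on the locus $\Im\Psi = 0$, the derivatives $\partial_q\Psi$ and $\partial_p\Psi$ vanish identically, while $\partial_v\Psi$ collapses to $\tfrac{1}{2}(p+p_t) - \xi$. The key cancellation is that $\partial_q S = (\partial_q q_t)\cdot p_t - p$ and $\partial_p S = (\partial_p q_t)\cdot p_t$ absorb exactly the contributions coming from the linear-in-$p_t$ and $-p$ pieces of $\Psi$, leaving only terms that are linear multiples of $(x+v/2-q_t)$ or $(x-v/2-q)$, both of which vanish on the locus.

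For uniform non-degeneracy, I compute the Hessian of $\Psi$ in $(q,p,v)$ at $t=0$, where $S\equiv 0$, $q_t=q$, $p_t=p$, and the polarization identity $|a+b|^2+|a-b|^2=2|a|^2+2|b|^2$ yields
\[
\Psi|_{t=0} \;=\; (p-\xi)\cdot v + i|x-q|^2 + \tfrac{i}{4}|v|^2.
\]
At the critical point $(q,p,v)=(x,\xi,0)$ the Hessian in block form (blocks labelled by $q,p,v$) has $2iI$ in the $(q,q)$-slot, $\tfrac{i}{2}I$ in the $(v,v)$-slot, $I$ in the $(p,v)$-slot, and zeros elsewhere; its determinant equals $(-2i)^d\neq 0$. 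Continuity of $M_t$ in $(t,q,p)$ then propagates non-degeneracy to a full neighborhood of $\{0\}\times\Sigma_E$, supplying the claimed $\epsilon,\epsilon_0>0$.

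Finally, for the critical value \eqref{PSIc}, I substitute the three critical-point equations into \eqref{PSI}: the two Gaussian summands vanish, the boundary terms $p_t\cdot(x+v/2-q_t)$ and $-p\cdot(x-v/2-q)$ vanish, and $-v\xi = -(q_t-q)\cdot\xi$; combined with the identity $S(t,q,p)=\int_{\gamma_{t,x,\xi}}(p\,dq - H\,ds)$ this yields \eqref{PSIc}. The principal technical hurdle is the bookkeeping in the second paragraph: the cancellations between Lemma \ref{DERIVS} and the explicit $q$- and $p$-derivatives of the quadratic Gaussian factors must line up exactly, so that no spurious constraints emerge from $\partial_q\Psi$ or $\partial_p\Psi$ and the critical set is truly parametrized by the single new equation $\xi=(p+p_t)/2$.
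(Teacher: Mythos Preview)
Your argument is correct and follows the same overall strategy as the paper: identify the locus $\Im\Psi=0$, read off two of the three critical-point equations from it, and obtain the third from $\partial_v\Psi=0$. You are in fact more careful than the paper in one respect: you explicitly verify via Lemma~\ref{DERIVS} that $\partial_q\Psi$ and $\partial_p\Psi$ vanish identically on the locus $\{\Im\Psi=0\}$, so that these equations impose no further constraints---something the paper's terse proof of the proposition takes for granted. Your computation of the critical value by direct substitution is likewise what the paper intends but does not spell out.

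The one genuine difference is in the non-degeneracy step. The paper establishes it via Lemma~\ref{P:det-Hess}, which computes $\det\text{Hess}_{q,p,v}\Psi$ at an \emph{arbitrary} dominant critical point and factors it as a multiple of $\det(I+M_t)\det(A_t+D_t+i(B_t-C_t))$; at $t=0$ (where $M_t=I$) this is manifestly nonzero, and continuity extends non-degeneracy to a neighborhood. Your shortcut---compute the Hessian directly from the explicit form $\Psi|_{t=0}=(p-\xi)\cdot v+i|x-q|^2+\tfrac{i}{4}|v|^2$ and read off $\det=(-2i)^d$---is entirely sufficient for the claim in this proposition and is considerably lighter. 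The paper's general formula is not wasted, however: it is needed immediately afterward to identify the leading amplitude in the stationary-phase expansion of Proposition~\ref{T:Wig-Prop}, where the factor $\det(I+M_t)^{-1/2}$ is precisely what survives after cancellation against the Herman--Kluk amplitude $a_0$.
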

Define the action,
\begin{equation} \label{ACTIONDEF} S(t, x, \xi) = \int_{\gamma_{t, x, \xi} } (pdq - H ds) \end{equation}
along the path $\gamma_{t, x, \xi}$.

\begin{proof}[Proof of Proposition \ref{PROPPROP}]
As mentioned above, the imaginary part of the phase is strictly positive unless 
\begin{equation} \label{CPEa} x+\frac{v}{2}-q_t=x-\frac{v}{2}-q=0.\end{equation}
Hence, any critical points where these conditions are satisfied will provide the dominant contribution to the Wigner function. 
It is immediate from \eqref{CPEa} that at the dominant critical points, 
\begin{equation} \label{CPE} x=\frac{q+q_t}{2},\;\; \qquad v = q_t-q.\end{equation}
In addition, the equation $d_v\Psi=0$ implies that 
\[  \xi = \frac{p+p_t}{2}.\]
\end{proof}
To complete the proof of Proposition \ref{T:Wig-Prop}, we fix $t > 0$,  consider the full $dq dp dv$ integral \eqref{E:Ucal-PSI}, and apply the method of stationary phase for positive complex phase functions (cf. \cite[Theorem 7.7.5]{HoI}). For this, we need to compute the Hessian of \eqref{PSI} at the dominant critical points. We use the notation \eqref{MtDEF}, which we repeat for the reader's convenience,
\[
A_t(q,p) = \dell_q q_t(q,p),\quad B_t(q,p) = \dell_pq_t(q,p),\quad C_t(q,p) = \dell_qp_t(q,p),\quad D_t(q,p)=\dell_p p_t(q,p).
\]
The key result is the following computation. 
\begin{lem}\label{P:det-Hess} 
Fix $t,x,\xi$ and suppose $(q_c, p_c, v_c)$ is a dominant critical point of $\Psi(q,p,v;t,x,\xi)$ \eqref{PSI}  in the variables $(q, p, v)$ for which $\dell_p q_t \neq 0.$ Then
 $\det\lr{\text{Hess}(\Psi)}(q_c,p_c,v_c)$ is given by,
\[(-1)^d\cdot \det(1+M_t(q_c, p_c))\det\lr{A_t(q_c,p_c) + D_t(q_c,p_c) +i (B_t(q_c, p_c)-C_t(q_c, p_c))}.\]
\end{lem}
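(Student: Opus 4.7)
The plan is to compute the $3d\times 3d$ Hessian of $\Psi$ directly at the dominant critical point $(q_c,p_c,v_c)$, organize it into $3\times 3$ blocks indexed by $(q,p,v)$, and then factor its determinant. Starting from \eqref{PSI} and using Lemma \ref{DERIVS} for the second derivatives of $S$, I would differentiate $\Psi$ twice; at the critical point the conditions $u_1 := x+v/2-q_t = 0$ and $u_2 := x-v/2-q = 0$ annihilate all contributions coming from second derivatives of $q_t,p_t$ in $(q,p)$, so every block becomes a polynomial in $A_t,B_t,C_t,D_t$. Conceptually the cleanest way to organize the answer is the decomposition $\mathrm{Hess}(\Psi) = H_R + i\,U^TU$, where
\[
U = \begin{pmatrix} -A_t & -B_t & \tfrac{1}{2}I \\ -I & 0 & -\tfrac{1}{2}I \end{pmatrix}
\]
is the $2d\times 3d$ Jacobian of $(u_1,u_2)$ with respect to $(q,p,v)$ and $H_R$ is the real Hessian of $S + p_t u_1 - pu_2 - v\xi$ at $(q_c,p_c,v_c)$.

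Next, since the $(v,v)$-block equals $\tfrac{i}{2}I$ and is invertible with inverse $-2iI$, the Schur complement formula reduces the problem to a $2d\times 2d$ determinant,
\[
\det\mathrm{Hess}(\Psi) = (\tfrac{i}{2})^d\,\det\bigl(K + 2i\, L L^T\bigr),
\]
where $K$ is the $2d\times 2d$ $(q,p)$-block of the Hessian and $L$ is the $2d\times d$ block coupling $(q,p)$ to $v$. Substituting the explicit block formulas and simplifying with the symplectic identities \eqref{E:symplectic-jacobian} produces an explicit matrix in $A_t,B_t,C_t,D_t$ and nothing else.

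The remaining and main step is to recognize $K + 2i\,LL^T$ as, up to a universal scalar, the product of two matrices whose determinants are $\det(I+M_t)$ and $\det(A_t+D_t+i(B_t-C_t))$. Structurally, $I+M_t$ is the differential of the midpoint map $(q,p)\mapsto\tfrac{1}{2}(I+\Phi^t)(q,p)$, while $A_t+D_t+i(B_t-C_t)$ is the ``complex Jacobian'' of $M_t$ already present in the Herman--Kluk amplitude $a_0$ of \eqref{AMP}. To make the factorization explicit, I would perform a sequence of block row and column operations of determinant one that triangularizes the reduced matrix with these two factors on the diagonal; the hypothesis $\partial_p q_t\neq 0$, i.e.\ $B_t$ invertible, is exactly what is required to eliminate the cross-terms. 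An alternative and perhaps more transparent route is to first perform a complex linear change of variables in \eqref{E:Ucal-PSI} that diagonalizes the Gaussian imaginary part $\tfrac{i}{2}(|u_1|^2+|u_2|^2)$: this produces $\det(A_t+D_t+i(B_t-C_t))$ as the Jacobian, and leaves a residual real quadratic form whose Hessian directly involves $I+M_t$. This final factorization is the main obstacle; it is a careful algebraic manipulation using the symplectic identities to cancel the numerous apparent cross-terms and to isolate the two claimed factors.
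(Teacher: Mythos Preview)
Your outline is sound up to the point where you apply the Schur complement in the $v$-block: the $(v,v)$-block is indeed $\tfrac{i}{2}I$, and reducing to a $2d\times 2d$ determinant in the $(q,p)$-variables is a legitimate first move. The gap is that you stop precisely where the work begins. You write that the factorization into $\det(I+M_t)$ and $\det(A_t+D_t+i(B_t-C_t))$ ``is the main obstacle; it is a careful algebraic manipulation,'' and then you do not perform it. That manipulation \emph{is} the lemma; everything before it is bookkeeping. Your claim that the hypothesis $B_t$ invertible ``is exactly what is required to eliminate the cross-terms'' is also unsubstantiated and, as it turns out, unnecessary: the paper's argument never invokes invertibility of $B_t$, so if your route genuinely needs it you are proving a weaker statement than required.

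The paper takes a different and more concrete path. Rather than Schur-complementing $v$ first and then trying to factor the $2d\times 2d$ remainder, it works directly on the full $3d\times 3d$ Hessian (ordered $(v,q,p)$): add $A_t$ times the first block-row to the second and $B_t^T$ times the first block-row to the third, then subtract twice the first block-column from the second. After a scalar renormalization of the first row and column, the lower-right $2d\times 2d$ block becomes exactly $(I+M_t)^T\Omega$, and the off-diagonal blocks are recognized as $(-iI\ \ I)(I+M_t)$ and its adjoint. This lets one factor out $\det(I+M_t)$ directly, after which a Schur complement on the now-trivial $(1,1)$-block yields $\det(A_t+D_t+i(B_t-C_t))$ by a one-line computation. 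The point is that the factorization emerges from recognizing $(I+M_t)^T\Omega$ as a block, not from triangularizing via $B_t^{-1}$; this is the idea your proposal is missing.
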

\begin{proof}

We claim that at the critical point,
 \begin{align*}
    \dell_{pp} \Psi &= B_t^T\lr{-D_t + i B_t},\qquad & \dell_{pq} \Psi = B_t^T\lr{-C_t + i A_t}\\
\dell_{pv} \Psi &= \frac{1}{2}\lr{\Id+ D_t - iB_t}^T,\qquad    & \dell_{vv} \Psi = \frac{i}{2}\Id\\
\dell_{qq} \Psi &= -A_t^TC_t + i\lr{A_t^TA_t + \Id},\qquad  & \dell_{qv} \Psi = \frac{1}{2}\lr{C_t+ i \lr{-A_t + \Id}}.
  \end{align*}
We summarize the key points of the calculations. Apriori, the Hessian involves second derivatives in $(q, p)$ of $(q_t, p_t, \dot{q}_t, \dot{p}_t)$ but in fact these cancel at a dominant critical point. First, 
  \begin{equation}
\dell_{p_j}\Psi = \dell_{p_j}p_t\lr{x+\frac{v}{2} - q_t} - \lr{x_j-\frac{v_j}{2}-q_j} + i \lr{-\dell_{p_j}q_t\lr{x+\frac{v}{2}-q_t}},\label{E:p-deriv}
\end{equation}
and at a critical point, \[\dell_{p_k,p_j}\Psi = -\dell_{p_j}p_t \dell_{p_k}q_t + i\lr{\dell_{p_j}q_t \del_{p_k}q_t}=(-\dell_{p_j} (p_t+i\dell_{p_j}q_t)\dell_{p_k}q_t=\lr{(-D_t+iB_t)^TB_t}_{k,j}\]
Also,
  \begin{equation}
\dell_{q_j}\Psi = \dell_{q_j}p_t\lr{x+\frac{v}{2} - q_t} - i \lr{\dell_{q_j}q_t\lr{x+\frac{v}{2}-q_t}+\lr{x_j -\frac{v_j}{2}-q_j}},\label{E:q-deriv}
\end{equation}
and at a critical point, 
\[
\dell_{p_k,q_j}\Psi = -\dell_{q_j}p_t \dell_{p_k}q_t + i\lr{\dell_{q_j}q_t \del_{p_k}q_t}.\]
Starting from \eqref{E:p-deriv}, we have
\[\dell_{v_kp_j}\Psi = \frac{1}{2}\lr{(\dell_{p_j}p_t)_k + \delta_{j,k} - i (\dell_{p_j}q_t)_k}=\frac{1}{2}\lr{\Id+D_t-iB_t}_{k,j}.\]
Again starting from \eqref{E:p-deriv}, we have
\[\dell_{tp_j}\Psi = \lr{-\dell_{p_j}p_t + i \dell_{p_j}q_t}\dot{q}_t
= \left[\dot{q}_t(-D_t + i B_t)\right]_j.\]
Starting from \eqref{E:q-deriv}, we have
\[\dell_{q_k q_j}\Psi = \dell_{q_j}p_t \dell_{q_k}q_t + i
\lr{\dell_{q_j}q_t\dell_{q_k}q_t + \delta_{j,k}} = \left[A_t^TC_t + i
\lr{A_t^TA_t + \Id}\right]_{k,j}.\]
Finally, from \eqref{E:q-deriv} we obtain 
\[\dell_{v_kq_j}\Psi = \frac{1}{2}\lr{(\dell_{q_j}p_t)_k + i\lr{-(\dell_{q_j}q_t)_k+\delta_{k,j}}}.
\]
The Hessian of $\Psi$ at $(q_c,p_c,v_c)$ is therefore
\[
\lr{\begin{array}{ccc}
  \frac{i}{2} \Id& \frac{1}{2}\lr{C_t + i (-A_t + \Id)}& \frac{1}{2}\lr{D_t + \Id - i B_t }\\
\frac{1}{2}\lr{C_t^T + i (-A_t + \Id)} & - A_tC_t + i (A_t^2 + \Id) & - C_t^T B_t + i A_tB_t\\
\frac{1}{2}\lr{D_t + \Id - i B_t^T } & - B_t^T C_t + i B_t^TA_t & B_t^T(-D_t + i B_t)
\end{array}},
\]
as claimed. We now calculate the determinant of the Hessian. Multiplying the first row on the left by $A_t$ and adding the result to the second row and then multiplying the first row by $B_t^T$ on the left and adding the result to the third row shows that the determinant of $\text{Hess}(\Psi)(q_c, p_c, v_c)$ is the same as the determinant of
\[
\lr{\begin{array}{ccc}
  \frac{i}{2} \Id& \frac{1}{2}\lr{C_t + i (-A_t + \Id)}& \frac{1}{2}\lr{\twiddle{D}_t - i B_t }\\
\frac{1}{2}\lr{C_t^T + i \twiddle{A}_t} & -i \twiddle{A}_t  & \twiddle{A}_t\\
\frac{1}{2}\lr{\twiddle{D}_t + i B_t^T } &  i B_t^T & B_t^T
\end{array}},
\]
where for any matrix $K,$ we write $\twiddle{K}=\Id +K.$ Next, multiplying the first column by $2$ and subtracting it from the second column, we find that the determinant of $\text{Hess}(\Psi)(q_c, p_c, v_c)$ is the same as the determinant of
\[
\lr{\begin{array}{ccc}
  \frac{i}{2} \Id& \frac{1}{2}\lr{C_t - i \twiddle{A}_t}& \frac{1}{2}\lr{\twiddle{D}_t- i B_t }\\
\frac{1}{2}\lr{C_t^T + i \twiddle{A}_t} & - C_t^T  & \twiddle{A}_t\\
\frac{1}{2}\lr{\twiddle{D}_t + i B_t^T } & -  \twiddle{D}_t & B_t^T
\end{array}}.
\]
Next, by dividing the first row and column by $\sqrt{2/i}$, we find that the determinant of $\text{Hess}(\Psi)(q_c, p_c, v_c)$ is the same as the determinant of
\[
\lr{\begin{array}{ccc}
   \Id& \frac{1}{\sqrt{2i}}\lr{C_t - i \twiddle{A}_t}& \frac{1}{\sqrt{2i}}\lr{\twiddle{D}_t- i B_t }\\
\frac{1}{\sqrt{2i}}\lr{C_t^T + i \twiddle{A}_t} & - C_t^T  & \twiddle{A}_t\\
\frac{1}{\sqrt{2i}}\lr{\twiddle{D}_t + i B_t^T } &  - \twiddle{D}_t & B_t^T
\end{array}}
\]
times $(i/2)^d$. Next, writing $\Omega = \twomat{0}{\Id}{-\Id}{0},$ note that
\[\twomat{-C_t^T}{\twiddle{A}_t}{-\twiddle{D}_t}{B_t^T}=\twiddle{M}_t^T\Omega.\]
Further, observe that
\[
\lr{C_t - i \twiddle{A}_t,\quad \twiddle{D}_t- i B_t } = (-i \Id\quad \Id)\twiddle{M}
\]
and that
\[
\lr{\begin{array}{c}
  C_t^T + i \twiddle{A}_t\\
\twiddle{D}_t + i B_t^T
\end{array}} 
=\twiddle{M}^T\lr{\begin{array}{c}  i \Id \\ \Id \end{array}} .
\]
Hence, the determinant of $\text{Hess}(\Psi)(q_c, p_c, v_c)$ is the same as $(i/2)^d$ times the determinant of
\[\lr{
  \begin{array}{cc}
    \Id& 0 \\
    0 &  \twiddle{M}^T
  \end{array}
}   \lr{
  \begin{array}{cc}
    \Id& \frac{1}{\sqrt{2i}}\lr{-i\Id~~~\Id} \twiddle{M}\\
\frac{1}{\sqrt{2i}}
    \lr{\begin{array}{c}
      i\Id \\ \Id
    \end{array}} & \Omega

  \end{array}}.
\]
Therefore, the determinant of $\text{Hess}(\Psi)(q_c, p_c, v_c)$ equals $(2/i)^d$ times $\det(\twiddle{M})$ times the determinant of
\[\lr{  \begin{array}{cc}
    \Id& \frac{1}{\sqrt{2i}}\lr{-i\Id~~~\Id} \twiddle{M}\\
\frac{1}{\sqrt{2i}}
    \lr{\begin{array}{c}
      i\Id \\ \Id
    \end{array}} & \Omega

  \end{array}}.
\]
Take the transpose of this matrix and using the Schur complement formula, we find that the determinant of this matrix is the same as the determinant of 
\[\Id - \frac{1}{2i}\left[\lr{i\Id~~~\Id} \Omega \twiddle{M}^T \lr{-i \Id~~~ \Id}^T\right]=\frac{1}{2i} \lr{-\Id~~~i\Id} \Omega M^T \lr{-i \Id~~~ \Id}^T,\]
where we have used that
\[\frac{1}{2i}\left[\lr{i\Id~~~\Id} \Omega \lr{-i \Id~~~ \Id}^T\right]= 2i\Id.\]
Finally, since 
\[\lr{-\Id~~~i\Id} \Omega M^T \lr{-i \Id~~~ \Id}^T = A_t + D_t +i (B_t^T - C_t^T)\]
and the determinant is invariant under transposing the matrix, we find that 
\begin{equation} \label{DETFORM} \det \lr{\text{Hess}(\Psi)(q_c, p_c, v_c)}= 2^{2d}\det\lr{\Id + M_t}\cdot \det(A_t + D_t +i (B_t - C_t)),\end{equation} 
as claimed.



\end{proof}
To complete the proof of Proposition \ref{T:Wig-Prop}, we note that when $(x, \xi)$ and $t=0$, we have $M_t = I$ and  obviously the modulus of  \eqref{DETFORM} is a positive constant. Hence the same is true for $(x, \xi) $ in a tubular neighborhood of some positive radius around $\Sigma_E$. Thus,  Proposition \ref{T:Wig-Prop} follows by stationary phase for positive complex phases \cite{HoI}.\hfill $\square$

\subsection{\label{dv} Remark: Integrating Out $dv$ First} 
We remark that a somewhat simpler analysis is possible for the proof of Proposition \ref{T:Wig-Prop} by first integrating in $dv$,  to get a reduced oscillatory integral in $dq dp$. This is natural since the $dv$ integral is essentially a Fourier transform of a Gaussian, slightly distorted by the non-constant amplitude. In particular the Hessian $d_v^2 \Psi$ at the critical point is non-degenerate. Eliminating $dv$ we obtain the reduced integral,

\begin{equation} \label{dqdp} \ucal_{\hbar}(t, x,\xi)= \frac{1}{(2\pi \hbar)^d}  \int_{\R^d} \int_{\R^d} e^{\frac{i}{\hbar}  \Psi_c(t, q, p; x, \xi) }  A(\hbar, q, p, t) d q dp\end{equation}
with reduced phase,
\begin{equation} \label{REDPHASE}  \begin{array}{l}  \Psi_c(t, x, \xi; q, p) :=S_t(q, p)+(p_t - p) \lr{x- \frac{q_t + q}{2}} - \langle q_t - q, \xi \rangle +i\lr{\abs{x -(\frac{q_t+ q}{2}) }^2}

 \end{array}\end{equation}
Here, we did not  use the equation $ \lr{x- \frac{q_t + q}{2}} =0$. The amplitude is that of the original $dqdpdv$ integral evaluated at $v = q_t - q$ and multiplied by the constant Hessian determinant in $v$. We need to further apply stationary phase for complex phase functions in $dqdp$ to obtain this as a critical point equation. If we do that, we obtain the critical point equation for the real part of \eqref{REDPHASE} 
  simplify to, $$\left\{ \begin{array}{l} x = \frac{q_t + q}{2}, \\ \\ 
   d_q S_t - \half ( p_t - p)  ( I  - d_q q_t) -  
\langle d_q q_t, \xi \rangle + \xi   = 0, \\ \\ 
  d_p S_t - \half  ( p_t - p) d_p q_t  -  \langle  d_p q_t , \xi \rangle = 0. \end{array} \right. , $$
and by Lemma \ref{DERIVS} reduce further to,
$$\left\{ \begin{array}{l} 
 \left(  \half ( p_t + p) - \xi\right) [- I + d_q q_t]  =0 , \\ \\ 
= \left( \half  ( p_t + p)  - \xi) \right)   d_p q_t  = 0, \end{array} \right. $$ 
which (again) imply $\half(p_t + p) = \xi$. At a dominant critical point, the real part of the Hessian is given by,
$$
\Re \rm{Hess}_{q, p} \Psi_{c, crit} = \begin{pmatrix}   & \\
A_tB_t -\half C_t (I + A_t)   & [ A_t D_t - \half B_t C_t]^T     \\ &&\\
A_t D_t - \half B_t C_t   & B_t D_t - \half D_t C_t  + \half C_t
\end{pmatrix} |_{\Theta^t(q, p) = (x, \xi)},
$$
and the imaginary part of the Hessian is given by, 
$$
\Im \rm{Hess} = \begin{pmatrix}  \frac{1}{4} [2 I 
+ 2 \langle \partial_q  q_t, \partial_q q_t \rangle + 2 \partial_q q_t] & 2 \langle \partial_p  q_t, \partial_q q_t \rangle   + 2 \partial_p q_t 
\\ & \\
2 \langle \partial_p  q_t, \partial_q q_t \rangle   + 2 \partial_p q_t  & \half  \langle \partial_p  q_t, \partial_p q_t \rangle
\end{pmatrix} 
$$
We leave this approach at this point since we have already calculated the full Hessian determinant.

\subsection{\label{S:Rem-Lag}Remark: Interpretation of Proof of Proposition \ref{T:Wig-Prop} by Lagrangian Submanifolds}
For fixed $(t, x, \xi)$, the  critical set of the phase \eqref{PSI}  in the sense of \cite{HoIII}  is therefore 

$$C_{W_{\hbar, t}} = \{(x, \xi, q, p, v): x = \frac{q+q_t}{2},\quad \xi = \frac{p+p_t}{2},\quad v= q_t-q\}. $$
Using Lemma \ref{DERIVS}, we find that the associated Lagrangian submanifold of $T^*(\R \times T^*\R^d)$ is given by,
\begin{equation} 
\label{Lambdat} \begin{array}{lll} \Lambda_{t} & = &  \{( x, d_x \Psi, \xi, d_{\xi} \Psi: (x, \xi, q, p, v) \in C_{W_{\hbar, t}(x, \xi))}\} \\ &&\\
& = & \{(x, (p_t-p), 
\xi,q_t - q),  x = \frac{q+q_t}{2},\quad \xi = \frac{p+p_t}{2}.\}\} \end{array} 
\end{equation}
We also consider the space-time Lagrangian \begin{equation} 
\label{Gamma} \begin{array}{lll} \Gamma & = &  \{t, d_t \Psi, x, d_x \Psi, \xi, d_{\xi} \Psi: (t,x, \xi, q, p, v) \in C_{W_{\hbar, t}(x, \xi))}\} \subset T^* (\R \times T^* \R^d)) \\ &&\\
& = & \{(t, - H(q, p), x, (p_t-p), 
\xi,q_t - q),  x = \frac{q+q_t}{2},\quad \xi = \frac{p+p_t}{2}.\} \end{array} 
\end{equation}

We have the natural projection,
\begin{equation} \label{pit} \pi : \Gamma \to \R \times T^*\R^d, \end{equation}
given by $$(t, - H(q, p) + \dot{p}_t(x + \frac{q_t - q}{2} - q_t), x, (p_t-p), 
\xi,q_t - q), (q, p, v)  \mapsto (t, x, \xi). $$
Since $v = q_t -q$  is uniquely determined  once $(q, p)$ are determined,  the fiber of \eqref{pit}  is the
set of solutions of $$\pi^{-1}(t, x, \xi) =  \{(q, p, v): x = \frac{q+q_t}{2},\quad \xi = \frac{p+p_t}{2},\quad v= q_t-q\} \leftrightarrow \{(t, \tau, q, p): \Theta^t(q, p) = (x, \xi)\}. $$
as defined in \eqref{PSItDEF}.  By the inverse function theorem, the solution is locally unique if $D_{(q, p)} \Theta^t$ does not have $-1$ as an eigenvalue. 
As an example where it is non-unique, we could let $(x, \xi) =(0,0)$ and consider `anti-podal' times when 
$\Theta^t(q, p) = - (q, p)$. For instance, in the case of the isotropic oscillator such a time is $t = \pi$. An oscillator 
on $\R^d$ with $k \leq d$ equal frequencies provides an example where the fiber is not discrete but has dimension
$d-k$.

\section{\label{SMOOTH-Proof}Proof of Theorem \ref{T:SMOOTH-Prop}: the $t$ Integral and Energy Asymptotics}

In this section we extend the analysis from the previous section, which involved computing the integral \eqref{E:Ucal-PSI} in $v,q,p$ to computing pointwise asymptotics for the Wigner transform $\ucal_\hbar(t,x,\xi)$ of the propagator, to include also the $t$ integral (see eg \eqref{Whfdelta})
\begin{equation}\label{E:Wigner-FIO-again}
W_{\hbar,f,1,E}(x,\xi)\sim \int_{\R^{3d}}e^{\ihbar (\Psi(q,p,v;t,x,\xi)+tE)}a_\hbar(t,q,p)\frac{dqdpdv}{(2\pi \hbar)^d}\frac{dt}{2\pi}   
\end{equation}
to determine the asymptotics of the Wigner distribution of the smoothed spectral projector.

As  in \cite{O98, TL}, it is the $t$-integral which introduces a degeneracy in the phase and a fold singularity in the associated Lagrangian submanifold (see Section \ref{FOLDBACK} and Section \ref{EFOLDSECT} for the geometric analysis of folds).  In the next Section \ref{MALGRANGE}, we put the phase into Malgrange normal form, 
or more correctly a stronger version  of it for cubic phases due to Chester-Friedman-Ursell \cite{CFU} and Levinson \cite{L61}. 

We consider two approaches to evaluating \eqref{E:Wigner-FIO-again}: (i) integrate in $(t, q, p, v)$ at once; (ii) integrate in $(q, p, v)$ first, reduce to a
one dimensional integral in $dt$ and then determine the asymptotics of this integral. The advantage of (i), which we carry out in \S \ref{S:tpqv-int}, is
that the critical point equations are considerably simpler to solve. However, the phase is complex-valued and the standard normal forms results need to be modified to apply to it.  The advantage of (ii), which we carry out in \S \ref{ONEDRED}, is that we can directly apply the asymptotics results from the literature \cite{CFU,HoI, GSt}. But the critical point equations become more complicated.

\subsubsection{\label{S:tpqv-int}The $dt dq dp dv$ integral} We extend Proposition \ref{PROPPROP} and Proposition \ref{P:det-Hess} by including the additional
integration in $t$.

\begin{proposition}\label{PROPPROPt}
Let $f\in \scal(\R)$. The Wigner function \[
W_{\hbar, E, f}(x,\xi) = \int_{\R^{3d}}  \int_{\R} \hat{\rho}(t) W_{\hbar, t}(x, \xi) e^{\frac{it E}{\hbar}} dq dp dv dt
\]
of the smoothed spectral function is a semi-classical Fourier integral kernel with a complex phase. The critical points in $ (t, q,p, v)$ at which the imaginary part of the phase 
$$\wt \Psi_E(t, q, p, v; x, \xi): = \Psi(t,q,p,v;x,\xi) + t E$$ 
vanishes  are all solutions of
\[
x = \frac{q+q_t}{2},\quad \xi = \frac{p+p_t}{2},\quad v= q_t-q,\quad H(q, p) = E.
\]

\end{proposition}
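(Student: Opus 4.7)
\medskip
\noindent\textbf{Proof plan for Proposition \ref{PROPPROPt}.} The plan is to piggy-back on Proposition \ref{PROPPROP}, which handled the $(q,p,v)$-integration for fixed $t$, and extend the critical-point analysis to include $t$. The oscillatory-integral representation in \eqref{E:Wigner-FIO-again} has the same complex phase $\Psi$ as in \eqref{PSI}, with the additional real summand $tE$ coming from Fourier inversion in \eqref{E:WU}. Since $tE$ is real, adding it does not alter the imaginary part of the phase, so
\[
\Im \wt\Psi_E \;=\; \Im \Psi \;=\; \tfrac{1}{2}\lr{\abs{x+\tfrac{v}{2}-q_t}^2+\abs{x-\tfrac{v}{2}-q}^2},
\]
and the set on which this vanishes is exactly the set computed in Proposition \ref{PROPPROP}. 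Thus I immediately recover the conditions $x+\tfrac{v}{2}-q_t=0$ and $x-\tfrac{v}{2}-q=0$, which give the dominant-critical-point constraints $x = (q+q_t)/2$ and $v = q_t-q$. The equation $d_v \wt\Psi_E = 0$ is the same as $d_v\Psi=0$ (the new term $tE$ is independent of $v$) and yields $\xi = (p+p_t)/2$, exactly as in Proposition \ref{PROPPROP}.

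The only genuinely new computation is to identify what the critical equation $\partial_t \wt\Psi_E = 0$ contributes. Differentiating \eqref{PSI} in $t$ and using Lemma \ref{DERIVS} to write $\partial_t S(t,q,p) = \dot q_t\cdot p_t - H(q_t,p_t)$, most terms collapse: the derivative of $p_t(x+v/2-q_t)$ produces $\dot p_t\,(x+v/2-q_t) - p_t\cdot\dot q_t$, the latter piece cancels $\dot q_t\cdot p_t$ coming from $\partial_t S$, and the imaginary contribution is proportional to $\dot q_t\,(x+v/2-q_t)$. At a dominant critical point the factor $x+v/2-q_t$ vanishes, so all of these remainders die and what is left is simply
\[
\partial_t \wt\Psi_E \;=\; -H(q_t,p_t) + E.
\]
Because $H$ is autonomous and $(q_t,p_t)=\Phi^t(q,p)$, we have $H(q_t,p_t)=H(q,p)$, and the critical-point equation $\partial_t\wt\Psi_E=0$ becomes exactly $H(q,p)=E$, the final constraint asserted in the proposition.

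The only subtlety I anticipate is making precise the statement that $W_{\hbar,f,1,E}$ is a \emph{semi-classical Fourier integral kernel with complex phase}: one must verify that $\wt\Psi_E$ is a positive complex phase in the sense of H\"ormander (i.e.\ $\Im \wt\Psi_E \ge 0$ with the real critical set a clean submanifold), so that the method of stationary phase for positive complex phases \cite[Theorem 7.7.5]{HoI} applies. The positivity is immediate from the explicit sum of squares above. The clean-critical-set issue is handled exactly as in Proposition \ref{PROPPROP} for the $(q,p,v)$ variables (uniform non-degeneracy in a tube around $\Sigma_E$ for small $|t|$); the $t$-direction, however, is precisely where the Hessian degenerates, corresponding to the fold singularity of the midpoint map $\Theta^*$ in Proposition \ref{FOLDPROPintro}. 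This degeneracy is \emph{not} an obstacle to the present proposition, which only asserts the form of the critical locus, but it is exactly the point that forces the subsequent Chester--Friedman--Ursell normal-form treatment in Section \ref{MALGRANGE}.
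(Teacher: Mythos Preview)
Your proposal is correct and follows essentially the same approach as the paper: you reuse Proposition \ref{PROPPROP} for the $(q,p,v)$ critical equations, observe that the added term $tE$ is real so the imaginary part is unchanged, and then compute $\partial_t\wt\Psi_E$ using Lemma \ref{DERIVS}, exploiting the vanishing of $x+\tfrac{v}{2}-q_t$ at a dominant critical point to reduce to $-H(q_t,p_t)+E=0$ and hence $H(q,p)=E$ by conservation of $H$. Your closing remarks on positivity and on the $t$-Hessian degeneracy anticipating the fold are apt and align with how the paper defers that analysis to the subsequent sections.
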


We postpone the proof of the last statement to Section \ref{HesstINT}, where we first integrate out the $(q, p, v)$ variables and reduce
to a one dimensional integral.

\begin{proof} The additional integration leads to the additional critical point equation,

 $$d_t \left( S(t, q, p)+ p_t (x + \frac{v}{2} - q_t) + t E +\frac{i}{2}\lr{\abs{x+\frac{v}{2}-q_t}^2+\abs{x-\frac{v}{2}-q}^2}) \right)= 0, $$
 whose real part is the equation,
 $$\begin{array}{l} d_t \left( S(t, q, p)+ p_t (x + \frac{v}{2} - q_t)  + t E\right)= 0\\ \\
  \iff \dot{q}(t) p(t) - H(q, p) + \dot{p}(x + \frac{v}{2} - q_t)
 - p_t \dot{q_t} + E = 0. \end{array}$$
 Due to the vanishing of the  imaginary part of the equation,  we also get $ \dot{p}(x + \frac{v}{2} - q_t)
 - p_t \dot{q_t} = 0$ at the critical point, and the additional critical point  equation simplifies to 
 $$\dot{q}(t) p(t) - H(q, p) 
 - p_t \dot{q_t}  + E= 0 \iff H(q, p) = E. $$
This proves all but the final statement, deferred to Section \ref{HesstINT}.

\end{proof}

\subsection{\label{ONEDRED} Reduction to a one-dimensional integral}
We now re-do the proof of Proposition \ref{PROPPROPt} by first eliminating the variables $(q, p, v)$ by stationary
phase and reducing to a single $dt$ integral. To state the result, we need to introduce some notation which will
be clarified in the course of the proof. Recall the Hamiltonian arc with $ \half (\gamma_{t, x, \xi}(0) + \gamma_{t, x, \xi}(t)) = (x, \xi) $ of  \eqref{gammaDEF},  $$\gamma_{t, x, \xi}(s) = \Phi^s (q(t, x, \xi), p(t, x, \xi)), \;\;\; s \in [0, t]$$ whose endpoints are given by, $$ \gamma_{t, x, \xi}(0) =  (q(t, x, \xi), p(t, x, \xi)),\;
\gamma_{t, x, \xi}(t) =  \Phi^t (q(t, x, \xi), p(t, x, \xi)),  $$
where $(q(t, x, \xi), p(t, x, \xi))$ are  local solutions  of  $$(q(t, x, \xi), p(t, x, \xi)) = (\Theta^{t})^{-1}(x, \xi) \iff \half (I + \Phi^t) (q(t, x, \xi), p(t, x, \xi)) = (x, \xi). $$
Note that $(\Theta^{t})^{-1}(x, \xi) $ is multi-valued but has well-defined branches near $\Sigma_E$  (see Lemma \ref{IFTLEM}).

We now rewrite  \eqref{PSIc} in an advantageous form, 
\begin{equation} \label{Psic*} 
\Psi_c(t, x, \xi) := \int_{\hat{\gamma}_{t, x, \xi}} (pdq ) - \int_{\gamma_{t, x, \xi}}  H ds = \int_{\beta_{t, x, \xi}} \omega - \int_{\gamma_{t, x, \xi}}  H ds , \end{equation}
where $\hat{\gamma}_{t, x, \xi}$ is the completed  contour  obtained by adding in the chordal link to obtain the  oriented  closed curve  connecting the endpoints of $\gamma_{t, x, \xi}$
 by the  chord
 $\alpha_{t, x, \xi} (s) = (1 -s) \gamma_{t, x, \xi}(0)  + s \gamma_{t, x, \xi}(t)$ (see \eqref{rhoform} ).  Note that
 $$\begin{array}{l}\int_{\alpha_{t, x, \xi}} p dq = \int_0^1 ( (1 - s) p_{t, x, \xi}(0) + s p_{t, x, \xi}(t) ) d   (1 - s) q_{t, x, \xi}(0) + s q_{t, x, \xi}(t)  \\ \\
 = \left( \int_0^1 ( (1 - s) p_{t, x, \xi}(0) + s p_{t, x, \xi}(t) ) ds \right) \cdot \left(  q_{t, x, \xi}(t) -  q_{t, x, \xi}(0) \right)\\ \\
 = \langle \xi,  q_{t, x, \xi}(t) -  q_{t, x, \xi}(0) \rangle. \end{array}
$$
  As mentioned below \eqref{rhoform}, $\hat{\gamma}_{t, x, \xi}$   bounds the two-dimensional surface
   $\beta_{t, x, \xi} $ consisting 
  of line segments joining $(x, \xi)$
to points of the Hamilton orbit, the integral \eqref{rhoform} equals the oriented area
  $ \int_{\beta_{t, x, \xi} } \omega $
  where  $\omega = d p \wedge dq$ is the standard symplectic form of $T^*\R^d$.

\begin{proposition}

\label{SMOOTHWEYLPROP}  The Wigner function $W_{\hbar, E, \rho}(x,\xi) = \int_{\R} \hat{\rho}(t) W_{\hbar, t}(x, \xi) e^{\frac{it E}{\hbar}} dt$ of the 
  smoothed spectral function is a  semi-classical Fourier integral kernel with real phase,
\begin{equation} \label{PSIc2} \Psi_E(t, x, \xi) :=  \int_{\beta_{t, x, \xi} } \omega  
- t H(\gamma_{t, x, \xi} (0))  +  t E, \end{equation}
and with amplitude 
$\left|  \det (1 + M_j(q_j, p_j,  t_j(E)) \right|^{-\half} $, where $q_j,p_j,t_j(E)$ are as in Proposition \ref{T:Wig-Prop} and Proposition \ref{T:SMOOTH-Prop2}.
For $(x, \xi)$ near $\Sigma_E$, the  critical point in $t$ of the phase occur at times $t$  such that there exist $(q,p) \in \Sigma_E$ for which
$(x, \xi) = \Theta^t(q, p)$. If $t_{\pm}(E, x, \xi)$ are the times $t$ such that $\Theta^t(q, p) = (x, \xi)$ has a solution, then
at a critical point, $\frac{\partial^2}{\partial t^2} \Psi_E(t) =  - d H(\gamma_{t, x, \xi}(0)) (J_{t, x, \xi}(0)) =  (\frac{dt}{dE})^{-1}$.

\end{proposition}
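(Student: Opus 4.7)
My plan is to integrate out $(q, p, v)$ first by complex stationary phase applied to \eqref{E:Wigner-FIO-again}, then analyze the resulting one-dimensional $t$-integral. For the $(q, p, v)$ integration (with $t, x, \xi$ fixed), I would invoke Proposition \ref{PROPPROP} to identify the critical set as $\{x = (q + q_t)/2,\; \xi = (p + p_t)/2,\; v = q_t - q\}$, and the Hessian determinant computation \eqref{DETFORM} in Lemma \ref{P:det-Hess}. The complex factor $\det(A_t + D_t + i(B_t - C_t))^{1/2}$ in the Herman-Kluk amplitude $a_0$ cancels against the corresponding factor in $\det \mathrm{Hess}_{q,p,v}\Psi$, leaving the amplitude $|\det(I + M_t)|^{-1/2}$ up to universal constants, as claimed. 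The critical value of the phase is $\Psi_c(t, x, \xi)$ of \eqref{PSIc}, and its imaginary part vanishes at the dominant critical point because each of the two imaginary quadratic terms in \eqref{PSI} is identically zero there; so the reduced $t$-integral has real phase $\Psi_E(t, x, \xi) := \Psi_c(t, x, \xi) + tE$.

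To put this phase into the form \eqref{PSIc2}, I invoke Stokes' theorem. Writing $\Psi_c = S(t, q, p) - \langle q_t - q, \xi \rangle$ and decomposing $S(t, q, p) = \int_\gamma p\, dq - \int_\gamma H\, ds$ via \eqref{Stqp}, observe that the chord $\alpha_{t, x, \xi}$ has mean momentum $\xi = (p + p_t)/2$, yielding $\int_\alpha p\, dq = \langle \xi, q_t - q \rangle$. Since the closed curve $\widehat\gamma = \gamma - \alpha$ bounds $\beta_{t, x, \xi}$ and $d(p\, dq) = \omega$, Stokes gives $\int_\gamma p\, dq - \int_\alpha p\, dq = \int_\beta \omega$. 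Autonomy of $H$ implies $\int_\gamma H\, ds = tH(\gamma(0))$, so the reduced real phase is exactly $\Psi_E(t, x, \xi) = \int_\beta \omega - tH(\gamma(0)) + tE$.

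For the critical-point analysis in $t$, I apply the envelope theorem: since $\Psi_c$ is the critical value of $\Psi$ in $(q, p, v)$, the $t$-derivative $\partial_t \Psi_c$ equals $\partial_t \Psi$ evaluated at the critical point with $(q, p, v)$ held fixed. A direct calculation from \eqref{PSI}, using $x + v/2 - q_t = 0$ at the critical point together with $\partial_t S = \dot{q}_t \cdot p_t - H$ from Lemma \ref{DERIVS}, gives $\partial_t \Psi|_{\mathrm{crit}} = -H(q, p)$. Hence $\partial_t \Psi_E = E - H(q(t, x, \xi), p(t, x, \xi))$, which vanishes exactly on $\Sigma_E$. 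Differentiating once more and applying the chain rule yields
\[
\partial_t^2 \Psi_E = -dH(\gamma_{t, x, \xi}(0))(J_{t, x, \xi}(0)), \qquad J_{t, x, \xi}(0) := \partial_t(q(t, x, \xi), p(t, x, \xi)).
\]
To identify this with $(dt/dE)^{-1}$, differentiate the implicit equation $H(q(t(E), x, \xi), p(t(E), x, \xi)) = E$ in $E$, which gives $dH(\gamma(0))(J) \cdot (dt/dE) = 1$.

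The main technical obstacle I expect is the amplitude bookkeeping in the first step: tracking the branches of the complex square roots in $a_0$ against those in the Hessian determinant \eqref{DETFORM} so that the cancellation yields the absolute-value form $|\det(I + M_t)|^{-1/2}$ with the correct phase conventions. A secondary concern is verifying the nondegeneracy of $\mathrm{Hess}_{q,p,v}\Psi$ uniformly for $t$ near $0$ and $(x, \xi)$ near $\Sigma_E$, needed to justify complex stationary phase; at $t = 0$ one has $M_0 = I$ so $\det(I + M_0) = 2^d \neq 0$, and the claim follows for $t$ in a sufficiently small neighborhood by continuity.
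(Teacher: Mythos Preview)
Your proposal is correct and follows essentially the same approach as the paper. You integrate out $(q,p,v)$ by complex stationary phase (invoking Proposition \ref{PROPPROP} and Lemma \ref{P:det-Hess}), note the cancellation between the Herman--Kluk amplitude and the Hessian determinant to obtain $|\det(I+M_t)|^{-1/2}$, rewrite $\Psi_c + tE$ in the Stokes form \eqref{PSIc2}, and then compute $\partial_t \Psi_E$ via the envelope theorem (what the paper calls the ``general principle'' at the start of its proof) to get $\partial_t \Psi_E = E - H(\gamma_{t,x,\xi}(0))$; the second derivative and the identification with $(dt/dE)^{-1}$ proceed exactly as in the paper. The only substantive addition in the paper's proof is an alternative direct computation of $\partial_t \Psi_E$ via the Reynolds transport formula applied to $\int_{\beta_{t,x,\xi}}\omega$, using the Jacobi field $J_{t,x,\xi}$ along $\hat\gamma$; this bypasses the envelope theorem but arrives at the same conclusion, and is presented as a redundant check rather than a necessary step.
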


\begin{proof} The equation \eqref{PSIc2} is the same as \eqref{Psic*}.   We start with a general principle: Let  $f: \R_x^{n} \times \R_y^{m} \to \R$ be a smooth function, and consider the critical point  equation $d_{x, y} f(x, y) = 0$. Let $(a,b)$ be a critical point such that    the partial  Hessian $(\frac{\partial^2}{\partial y_j \partial y_k} f)(a, b)$ is non degenerate. Then by the  implicit function theorem,   there
exists an open set $U \subset \R^n$ containing $a$ and a unique differentiable function $g: U \to \R^m$  such that $g(a) = b$ and 
$d_y f (x, g(x)) = 0$ identically in $U$. Moreover, the critical point equations $d_x f(x, g(x)) = 0$ and $d_y f(x, g(x)) = 0$
are equivalent in $U$ to $d_{x, y} f(x, y) = 0$. Indeed, $   \frac{\partial }{\partial x} f(x, g(x))  = \frac{\partial f}{\partial x} (x, g(x)) +
\frac{\partial f}{\partial y} (x, g(x)) \frac{\partial g(x)}{\partial x}. $ The second term is zero in $U$ by definition of $g(x)$,
and then the first term must be zero when the left side is zero.

We apply this to the case where $n =1, m = 3d$ and $f(t, q, p, v)$ of the phase \eqref{PSI}. It follows
from Proposition \ref{PROPPROP} that the non-degneracy condition is satisfied as long as $D \Theta^t$ is non-singular, i.e. $D \Theta^t$ does not have $-1$ as an eigenvalue. We then get locally defined branches $g: \R \to \R^{3d}$, 
$g(t) = (q_t, p_t, v_t)$ such that $d_{q, p, v} \Psi(t, (q_t, p_t, v_t)) = 0$. The phase of the additional $dt$ integral
is $ \Psi(t, (q_t, p_t, v_t))  +  t E$. By the general principal, its critical points are the same as for those of
$d_{t, q, p, v} \Psi =0$ and by Proposition \ref{PROPPROPt}  these are given by the equations stated there.

\subsubsection{Calculation of $\frac{d}{dt} \Psi_E (t_c)$: Reynolds formula} A direct proof without using implicit functions can be obtained from the   Reynolds transport formula for the first term of \eqref{Psic*}, 
$$\frac{d}{dt}  \int_{\beta_{t, x ,\xi}} \omega =    \int_{\partial \beta_{t, x, \xi} }  \iota_{J_{t, x, \xi}} \omega = \int_{\hat{\gamma}}   \iota_{J_{t, x, \xi}} \omega  .$$
Here, $J_{t, x, \xi} $ is the variational (Jacobi)  vector field along $\partial \beta_{t, x, \xi} =  \hat{\gamma}_{t, x ,\xi} $ obtained by varying $\gamma_{t, x, \xi}$ and the chord $\alpha_{t, x, \xi}$. $J_{t, x, \xi} $  is determined by
$\gamma_{t, x, \xi}(0)$, since  $\Phi^s \gamma_{t,x, \xi}(0) = 
\gamma_{t, x, \xi}(s). $ Hence,
$$J_{t, x, \xi} (s) : = \frac{\partial}{\partial t} \Phi^s \gamma_{t,x, \xi}(0)  = D_{ \gamma_{t,x, \xi}(0) } \Phi^s \left( \frac{d}{dt}  \gamma_{t,x, \xi}(0) \right). $$  $J_{t, x, \xi}(s)$ is a Jacobi field along $\gamma_{t, x, \xi}(s)$, i.e. the variational vector field obtained by varying  a curve of of Hamilton orbits.  It is not a Jacobi field  on $\alpha_{t, x, \xi}$ since these are not extremals, but for brevity we continue to call it a Jacobi field.
We compute the integral by writing $\hat{\gamma}_{t, x, \xi}  = \gamma_{t, x, \xi} \cup \alpha_{t, x, \xi} $ and studying each integral separately. Next, recall that $\Xi_H$ is the Hamilton vector field of $H$. Since $\gamma_{t, x, \xi}'(s) = \Xi_H(\gamma_{t, x, \xi}(s))$, we find \begin{equation}
\label{TERM1}\begin{array}{l}  \int_{\gamma_{t, x, \xi} } \iota_{J_{t, x, \xi}} \omega = \int_0^t \omega(J_{t, x, \xi}(s), \Xi) ds 
\\ \\= \int_0^t   d H(J_{t, x, \xi}(s) ) =  \int_0^t   (\Phi^s)^* d H(J_{t, x, \xi}(0) )ds = t d H(J_{t, x, \xi}(0)). \end{array} \end{equation}

On the other hand,  since $\frac{d}{ds} \alpha_{t, x, \xi} (s) = \frac{d}{ds} ( (1 -s) \gamma_{t, x, \xi}(0)  + s \gamma_{t, x, \xi}(t) )= \gamma_{t, x, \xi}(t) -
\gamma_{t, x, \xi}(0), $ and since along $\alpha_{t, x, \xi}$, 
$J_{t, x, \xi} = (1-s) J_{t, x, \xi}(0) + s J_{t, x, \xi}(t), $ we have
 $$\begin{array}{lll} \int_{\alpha_{t, x, \xi} } \iota_{J_{t, x, \xi}} \omega = \int_0^1 \omega(J_{t, x, \xi} (s), \alpha_{t, x, \xi}'(s)) ds
 & =& \omega ( \half (J_{t, x, \xi}(0) +  J_{t, x, \xi}(t)),   \gamma_{t, x, \xi}(t) -
\gamma_{t, x, \xi}(0)) = 0, \end{array} $$
since 
\[
\half (J_{t, x, \xi}(0) +  J_{t, x, \xi}(t)) = \half \frac{d}{dt} (\gamma_{t, x, \xi}(0) + \gamma_{t, x, \xi}(t))
= \frac{d}{dt} (x, \xi)=0.
\]
That leaves the second term of \eqref{Psic*}, 
\begin{equation} \label{TERM2} \frac{d}{dt}  \int_{\gamma_{t, x, \xi}}  H ds  = \frac{d}{dt} t  H(\gamma_{t, x, \xi}(0)) = 
 H(\gamma_{t, x, \xi}(0))  + t d H (J_{t, x, \xi}(0)). \end{equation} 
 
 It follows from \eqref{TERM1} - \eqref{TERM2} that
$$\frac{\partial}{\partial t} \Psi_E = - H(\gamma_{t, x, \xi} (0)) + E,  $$
proving Proposition \ref{SMOOTHWEYLPROP}.

\subsubsection{\label{HesstINT} Calculation of $\Psi_E''(t_c)$. }

We differentiate 
$\frac{\partial}{\partial t} \Psi_E = - H(\gamma_{t, x, \xi} (0)) + E  $
once more to get
$$\frac{\partial^2}{\partial t^2} \Psi_E = - \frac{d}{dt} H(\gamma_{t, x, \xi} (0)) = - d H(\gamma_{t, x, \xi}(0)) (J_{t, x, \xi}(0)). $$

We claim that if $t(E, x, \xi)$ is implicitly one of the solutions of $\Theta^t(q, p) = (x, \xi)$ with $(q, p) = \gamma_{t, x, \xi}(0) \in \Sigma_E$ then \begin{equation} \label{dtE} \left(\frac{dt(E, x, \xi)}{dE} \right)^{-1}= - d H(\gamma_{t, x, \xi}(0)) (J_{t, x, \xi}(0)) |_{t = t(E,x, \xi)}. \end{equation}
This follows because $t_{\pm}(E, x, \xi)$ may be defined as the locally unique pair of solutions near $t=0$ of
the equation $H((\Theta^{t})^{-1} (x, \xi)) = E$. Thus, $$\frac{d}{dE} H((\Theta^{t})^{-1}(x, \xi)) = -
d H_{(\Theta^{t})^{-1} (x, \xi)} \frac{d}{dt} ((\Theta{t})^{-1}(x, \xi) \cdot  \frac{dt}{dE} =1. $$
Here, as above, $(\Theta^{t})^{-1} $ is multi-valued, so that $\Theta^{t})^{-1} (x, \xi)$ is a set but near $\Sigma_E$ it is double-valued and we mean that the equations hold for either branch. One such branch is given by
$(\Theta^{t})^{-1} (x, \xi) = \gamma_{t, x, \xi}(0)$ (the other being $g_{t, x, \xi}(t)$ and $ \frac{d}{dt} ((\Theta^{t})^{-1}(x, \xi) =
J_{t, x, \xi} (0)$ (resp. $J_{t, x, \xi}(t)$. This proves \eqref{dtE}. Regarding the amplitude,
after eliminating $(q, p, v)$ in  Proposition \ref{P:det-Hess}, the amplitude $$a_0=\lr{\det\lr{A_t + D_t + i(B_t-C_t)}}^{1/2}\exp(-itH(q,p))$$  gets multiplied by 
 $\left(\det\lr{\text{Hess}(\Psi)}(q_c,p_c,v_c) \right)^{-\half}$ where the Hessian is, 
\[(-1)^d\cdot \det(1+M_t(q_c, p_c))\det\lr{A_t(q_c,p_c) + D_t(q_c,p_c) +i (B_t(q_c, p_c)-C_t(q_c, p_c))}.\]
This cancels all factors of the amplitude except for  $\det(I + M_t)$. This concludes the proof of Proposition \ref{SMOOTHWEYLPROP}.

\end{proof}

\subsection{Background on folds \label{FOLDBACK}}

In the next Section \ref{EFOLDSECT} we discuss fold singularities of Lagrangian 
submanifolds associated to the Wigner spectral function. In preparation, we review the definitions and
properties of folding maps $f: X \to Y$  for general manifolds. References include \cite{G, HoIII}.

We follow the exposition in \cite[Appendix 1, Page 109]{G}. Let $f: X \to Y$ be a smooth map of n-dimensional
manifolds and let $S \subset X$ be a hypersurface (codimension one submanifold) of $X$. Let $d V_Y$ be a volume
form on $Y$.

Then $f: X \to Y$ is a folding map with $S$ as the fold locus if the following are satisfied,
\begin{itemize}
\item $S$ is set of critical points of $f$, i.e. $D_s f: T_s X \to T_{f(s)} Y$ fails to be surjective ; \bigskip

\item At every $s \in S, $ the kernel $\ker D_s f$ is one-dimensional  and is transversal to $T_s S$;\bigskip

\item $f^* dV_Y$ vanishes to first order at each $s \in S$, i.e. $\det D_s f$ vanishes to first order on $S$.

\end{itemize}

In this case, one can find coordinates $x_1, \dots, x_n$ on $X$ and coordinates $y_1, \dots, y_n$ on $Y$ so
that $S = \{x_n = 0\}$ and $f(S) = \{y_n = 0\}$ and so that $f^* y_j = x_j$, for $j \leq n-1$ and $f^*y_n = x_n^2$.
Associated to the folding map is a canonical involution $\sigma: V \to V$ in a neighborhood $V$ of $S$ in $X$, 
fixing $S$ and having the local form $\sigma (x_1, \dots, x_{n-1}, x_n) = (x_1 \dots, x_{n-1}, - x_n)$.

\subsection{\label{EFOLDSECT} Analysis of the fold singularity as a singularity of the projection of the Lagrangian}


The  time integral is the semi-classical  Fourier transform of the $dq dp dv$ integral, and the associated Lagrangian
submanifold generated by the phase is the classical analogue of the Fourier transform, $(t, \tau) \to (-\tau, t)$ applied
to \eqref{Lambdat} with $\tau = E$, namely 
\begin{equation} \label{LAGDEF2} \begin{array}{lll} \Lambda_{E} & = &  \{(x, \xi, d_{x, \xi} \Psi_E): d_{q,p, v, t} \wt \Psi_E= 0\}
\subset T^*(T^*\R^d))\\&&\\
& = &\{ (x, \xi,  p_t - p, q  - q_t): x = \frac{q+q_t}{2},\quad \xi = \frac{p+p_t}{2}, H(q, p) = E \} \\&&\\ & = &  \{(x, \xi, -2(x - q), 2(\xi - p)):  
x = \frac{q+q_t}{2},\quad \xi = \frac{p+p_t}{2}, H(q, p) = E\}.\end{array} \end{equation}

Note that the change of the Lagrangian submanifold $\Lambda_t \to \Lambda_E$ of \eqref{Lambdat} under the  $dt$ integral corresponds to the symplectic map  $(t, \tau) \to (- \tau, t)$ in \eqref{Gamma}.
The phase $\wt \Psi_E$ induces the Lagrangian immersion, 

\begin{equation} \label{iotaE} \begin{array}{lll} \iota_{\wt \Psi_E}:  (t, q, p, x, \xi) \in C_{\Psi_E} &  \to &  (\frac{q+q_t}{2},  \frac{p+p_t}{2},   q  - q_t, p_t - p) \\ &&\\
& = & (x, \xi, -2 (x -q), 2 (\xi - p)), \end{array} \end{equation}
which is closely related to the map \eqref{PSItDEF}.


The set of $\lambda \in \Lambda_{E,x, \xi}$ on which the natural projection
\begin{equation} \label{CAUSTIC} \pi: \Lambda_{E} \to T^*\R^d \end{equation} 
is a submersion is called the regular set and its complement where $D_{\lambda} \pi$ is singular is called the (Maslov) singular
cycle. The image of the Maslov singular cycle under $\pi$ is called the `caustic'. When $\Sigma_E$ is strictly convex,
the image is $B_E$ and there is an apparent fold over its boundary $\Sigma_E$. We prove this folding statement along with the closely related    Proposition \ref{FOLDPROPintro}.



\begin{prop}\label{FOLDPROP} 

\begin{itemize} \item (i) \; If $\Sigma_E$ is convex, the  map  \eqref{CAUSTIC}    $\pi: \Lambda_{E} \to B_E$ has a fold singularity along the zero section $ \Sigma_E \times \{0\} \subset T^*(T^* \R^d))$. 

\item (ii)   The map $\wt \Theta: \R \times \Sigma_E \to B_E$ with $\wt \Theta (t, q, p) = \Theta^t(q, p)$   \eqref{PSItDEF} fixes the diagonal when $t=0$ and the kernel
of its derivative is $2 \frac{\partial}{\partial t} - \Xi_H , $ i.e. $D \wt \Theta \left(2 \frac{\partial}{\partial t} - \Xi_H\right) |_{t=0}= 0$. \end{itemize}

 \end{prop}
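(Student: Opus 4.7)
The plan is to identify both statements as assertions about the midpoint map $\wt{\Theta}(t,q,p)=\Theta^t(q,p)$. The parameterization
\[
(t,q,p) \in \R\times\Sigma_E \;\longmapsto\; \bigl(\Theta^t(q,p),\, p_t-p,\, q-q_t\bigr) \in \Lambda_E
\]
read off from \eqref{LAGDEF2} is a local diffeomorphism onto $\Lambda_E$, and under it the projection $\pi$ pulls back to $\wt{\Theta}$. Moreover, at $t=0$ we have $q_t=q$ and $p_t=p$, so the fiber covectors vanish; hence the zero section $\Sigma_E\times\{0\}$ corresponds exactly to $\{t=0\}\times\Sigma_E$. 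Part (i) thus reduces to verifying the three conditions of Section \ref{FOLDBACK} for $\wt{\Theta}$ along $\{0\}\times\Sigma_E$, while part (ii) is the statement that the kernel of $D\wt{\Theta}|_{t=0}$ is spanned by $2\partial_t-\Xi_H$.

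For part (ii), since $\Theta^0=\Id$ and $\partial_t\Theta^t|_{t=0}=\tfrac12\Xi_H(q,p)$, the differential acts on $(\alpha,v)\in T_{(0,q,p)}(\R\times\Sigma_E)$ by
\[
D\wt{\Theta}|_{(0,q,p)}(\alpha\partial_t + v) = \tfrac{\alpha}{2}\Xi_H(q,p) + v.
\]
Because $dH(\Xi_H)=\omega(\Xi_H,\Xi_H)=0$, the vector $\Xi_H$ lies in $T_{(q,p)}\Sigma_E$, so $v=-\tfrac{\alpha}{2}\Xi_H$ furnishes a one-parameter family of kernel elements; normalizing $\alpha=2$ gives the generator $2\partial_t-\Xi_H$. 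This also shows the rank drops by exactly one at $t=0$. Transversality of the kernel to the fold locus $\{0\}\times T\Sigma_E$ is then immediate, since $2\partial_t-\Xi_H$ has nonzero $\partial_t$-component.

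It remains to show that the Jacobian $J(t,q,p):=\det D\wt{\Theta}$, with respect to any fixed volume forms on source and target, vanishes to exactly first order on $\{t=0\}$. I will choose local coordinates $(y^1,\ldots,y^{2d-1})$ on $\Sigma_E$ near $(q,p)$ with chart map $\gamma(y)$ satisfying $\partial_{y^1}\gamma(0)=\Xi_H(q,p)$. Up to a fixed constant,
\[
J(t,y) = \det\bigl[\Xi_H(\Phi^t\gamma),\,(I+M_t)\partial_{y^1}\gamma,\,\ldots,\,(I+M_t)\partial_{y^{2d-1}}\gamma\bigr].
\]
The alignment $\partial_{y^1}\gamma=\Xi_H$ makes the vanishing $J(0,y)=0$ transparent because the first two columns become $\Xi_H$ and $2\Xi_H$. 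Differentiating in $t$, the contributions where $\partial_t$ lands on a column indexed by $y^j$ with $j\geq 2$ vanish (columns $0$ and $1$ remain proportional), and the two surviving terms combine to a nonzero multiple of
\[
\det\bigl[\Xi_H,\, D\Xi_H\cdot\Xi_H,\, \partial_{y^2}\gamma,\,\ldots,\,\partial_{y^{2d-1}}\gamma\bigr].
\]

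This determinant is nonzero if and only if $D\Xi_H\cdot\Xi_H\notin T_{(q,p)}\Sigma_E$, equivalently $dH(D\Xi_H\cdot\Xi_H)\neq 0$. Writing $\Xi_H=\Omega\nabla H$ with $\Omega$ the standard symplectic matrix and using antisymmetry $\Omega^T=-\Omega$ yields the symplectic identity
\[
dH(D\Xi_H\cdot\Xi_H) = -\,\mathrm{Hess}(H)(\Xi_H,\Xi_H).
\]
Hypothesis \eqref{HYP} is equivalent to $\mathrm{Hess}(H)$ being positive definite on $T\Sigma_E$, and since $\Xi_H\in T\Sigma_E\setminus\{0\}$ on the regular level set $\Sigma_E$, we conclude $\partial_t J|_{t=0}\neq 0$. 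By Taylor expansion, $J$ vanishes to exactly first order on $\{t=0\}$, which verifies the third fold condition and, via the implicit function theorem, identifies the critical set as precisely $\{t=0\}\times\Sigma_E$. The main technical obstacle is the bookkeeping in $\partial_t J|_{t=0}$; aligning the first coordinate direction with $\Xi_H$ is what reduces the otherwise unwieldy expansion of a $2d\times 2d$ determinant to the single symplectic identity above and isolates the convexity hypothesis as the decisive input.
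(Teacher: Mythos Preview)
Your proof is correct and in fact more complete than the paper's. Both begin the same way, by identifying $\pi:\Lambda_E\to B_E$ with $\wt\Theta:\R\times\Sigma_E\to B_E$ via the parameterization in \eqref{LAGDEF2}, so that it suffices to analyze $\wt\Theta$. For part~(ii), the paper finds the kernel direction by exhibiting the explicit curve $\alpha(s)=(2s,\Phi^{-s}(q,p))$ and observing that $\wt\Theta(\alpha(s))=\tfrac12(\Phi^{-s}(q,p)+\Phi^s(q,p))$ is even in $s$, hence has vanishing first derivative at $s=0$; you instead write down the linear map $D\wt\Theta|_{t=0}(\alpha\partial_t+v)=\tfrac{\alpha}{2}\Xi_H+v$ directly and read off the kernel. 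The paper's curve argument is slightly slicker but only exhibits one kernel vector, whereas your linear-algebraic version also confirms immediately that the kernel is exactly one-dimensional. More importantly, the paper's proof stops there and never explicitly verifies the third fold condition (first-order vanishing of the Jacobian); that computation is effectively deferred to the later calculation of $\Psi_E'''|_{t=0}$ in Section~\ref{MALGRANGE}. Your determinant expansion with the aligned frame $\partial_{y^1}\gamma=\Xi_H$, together with the identity $dH(D\Xi_H\cdot\Xi_H)=-\mathrm{Hess}(H)(\Xi_H,\Xi_H)$, is a genuine addition: it isolates precisely where the convexity hypothesis \eqref{HYP} enters and makes the fold verification self-contained.
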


  This is the folding result
 relevant to the asymptotics of the smoothed  Weyl-Wigner asymptotics of Theorem \ref{T:SMOOTH-Prop}.
 First we prove Proposition \ref{FOLDPROPintro}.

\begin{proof}  We first relate \eqref{CAUSTIC} and \eqref{PSItDEF} through the diagram,
 
$$ \begin{array}{llll} \iota_{\wt \Psi_E} : C_{\wt \Psi_E}  & \rightarrow & \Lambda_{E} \\ &&\\
\pi  \downarrow && \downarrow \pi \\&&\\ \Theta:  \R \times \Sigma_E & \rightarrow & B_E,
 \end{array}$$
 where in each case $\pi$ denotes the natural projection to the base. This shows that $C_{\Psi_E}$ is the fiber-product 
 $\Lambda_E \times_{B_E} (\R \times \Sigma_E)$ with respect to $B_E$, i.e. pairs of points in the product 
 space which get mapped to the same point in  $ B_E$.  This description guides the analysis of the fibers of
 the maps in the diagram. 
 
 It is evident that the projection $\pi: C_{\wt \Psi_E} \to (\R \times \Sigma_E)$ is a diffeomorphism, since 
 $(t, q, p; x, \xi) \in C_{\wt \Psi_E}$ occurs only when $(x, \xi) = \Theta^t(q, p)$, so $(x, \xi)$ is redundant. 
 From the formula in \eqref{iotaE} we see that $(q, p)$ are uniquely determined by the point
 $(x, \xi, - 2(x-q), 2 (\xi - p))  \in \Lambda_E$; since $t$ is defined by $q_t = 2 x - q, p_t = 2 \xi -p$ ,  the point $(t, q, p, x, \xi) \in C_{\wt \Psi_E}$ is uniquely
 determined. Hence, also $\iota_{\wt \Psi_E} $ is a diffeomorphism. It follows that the fold singularities of $\pi: \Lambda_E \to B_E$ and $\Psi: \R \times \Sigma_E \to B_E$ are equivalent.

The fiber of the projection \eqref{CAUSTIC} is,
$$\begin{array}{lll} \pi^{-1}(x, \xi) & = &  \{(x, \xi, q - q_t, p_t - p): H(q, p) = E, (x, \xi) = (\frac{q+q_t}{2},  \frac{p+p_t}{2})\}
\\ && \\
& = & \{((x, \xi),  q - q_t, p_t - p)),\; (q, p) \in (\Theta^t)^{-1}(x, \xi) \cap \Sigma_E)\}. 
\end{array}$$
It is evident that the inverse image 
of the boundary under \eqref{CAUSTIC} is the  zero section $0_{T^*(T^*\R^d)}$ over $\Sigma_E$. It  is the fold locus of the the map $\Theta: \R \times \Sigma_E$ at $t=0$ and therefore of the natural projection, 
$\pi: \Lambda_{E} \to T^*\R^d. $
It is the fixed point set of the involution $\sigma: (t, q, p) \to (-t, \Phi^t(q, p))$.
reversing the endpoints of the chord whose midpoint is $(x, \xi)$.

 To verify that $\Theta$ has a fold singularity along $\{0\} \times \Sigma_E$, i.e. satisfies the conditions for a fold in Section \ref{FOLDBACK} Consider the curves,
$$\alpha(s) = \alpha_{q, p}(s) : = (2s, \Phi^{-s}(q, p)) : \R \to \R \times \Sigma_E. $$
Then $\alpha(0) =  (0, q, p)$,  $\alpha'(0) = 2 \frac{\partial}{\partial t} - \Xi_H$ and (referring to \eqref{PSItDEF})  $$\wt \Theta (\alpha(s)) = \half(\Phi^{-s}(q,p) + \Phi^{2s} (\Phi^{-s}(q,p)) = \half(\Phi^{-s}(q,p) + \Phi^{s}(q,p)), $$
hence $$(D_{(0, q, p)} \wt \Theta) \alpha'(0)  = \frac{d}{ds} |_{s=0} \wt \Theta (\alpha(s)) = - \Xi_H(q,p) + \Xi_H(q, p) = 0. $$
 we see that $\wt \Theta$ has a fold along the 
points $\{0\} \times \Sigma_E$.

\end{proof}

 \begin{cor} \label{IFTLEM}  If $\Sigma_E$ is convex, there exists $\epsilon > 0$ so that the map $\Theta$ 
 is 2-1 on  $(-\epsilon, 0) \cup (0, \epsilon) \times \Sigma_E$, i.e.  if $|t| \leq \epsilon$, $(x, \xi) \in B_E$  and $d((x, \xi), \Sigma_E) < \epsilon$ then there is a unique Hamiltonian arc of time $t$ (up to orientation reversal) for which $(x, \xi)$ is the chordal midpoint.
 \end{cor}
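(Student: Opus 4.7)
The plan is to combine the fold normal form guaranteed by Proposition \ref{FOLDPROP}(ii) with the explicit involution \eqref{SYM} and a compactness argument on $\Sigma_E$. Since $B_E$ is strictly convex (hence compact) by \eqref{HYP}, the energy surface $\Sigma_E = \partial B_E$ is a compact hypersurface, so uniform neighborhoods will be available.

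First, I would invoke Proposition \ref{FOLDPROP}(ii): the map $\wt\Theta: \R \times \Sigma_E \to B_E$ has a fold along $\{0\}\times \Sigma_E$ in the sense of Section \ref{FOLDBACK}. By the local normal form recalled there, for each $(0,q_0,p_0)\in\{0\}\times\Sigma_E$ there exist local coordinates $(y_1,\dots,y_{2d-1})$ on a neighborhood $U_{q_0,p_0}\subset \R\times\Sigma_E$ of $(0,q_0,p_0)$ and coordinates $(z_1,\dots,z_{2d-1})$ on a neighborhood of $\wt\Theta(0,q_0,p_0) = (q_0,p_0)\in B_E$ in which
\[
\wt\Theta(y_1,\dots,y_{2d-2},y_{2d-1}) = (y_1,\dots,y_{2d-2},y_{2d-1}^2),
\]
with the fold locus corresponding to $\{y_{2d-1}=0\}$. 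In particular, on such a coordinate patch $\wt\Theta$ is exactly $2$-to-$1$ over its image away from the fold, and the two preimages of any target point are exchanged by $y_{2d-1}\mapsto -y_{2d-1}$. By compactness of $\Sigma_E$, finitely many such patches cover $\{0\}\times\Sigma_E$, which yields a uniform $\epsilon>0$ so that the normal form (and hence the 2-to-1 property) holds on $(-\epsilon,0)\cup(0,\epsilon)\times V$, where $V$ is a tubular neighborhood of $\Sigma_E$ in $B_E$.

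Next, I would identify the two local preimages globally using the symmetry \eqref{SYM}, which reads $\Theta^t(q,p) = \Theta^{-t}(\Phi^t(q,p))$. Since $\Phi^t$ preserves $\Sigma_E$, if $(q,p)\in\Sigma_E$ is a preimage of $(x,\xi)$ under $\wt\Theta(t,\cdot,\cdot)$ with $0<|t|<\epsilon$, then $(\Phi^t(q,p), -t)$ is a second genuine preimage on $\{-t\}\times\Sigma_E$. These two preimages are distinct for $|t|\neq 0$ because $(q,p)\neq\Phi^t(q,p)$ on the time scale $|t|<\epsilon < T_{\min}$, which is strictly less than the period of the shortest closed orbit. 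Matching this globally defined involution $\sigma(t,q,p) = (-t,\Phi^t(q,p))$ to the local coordinate involution $y_{2d-1}\mapsto -y_{2d-1}$ in each chart (both fix $\{0\}\times\Sigma_E$ and both generate the fibers of $\wt\Theta$ on the punctured neighborhood), one concludes that the local two-sheeted structure agrees with the orientation-reversal involution. Up to that involution the preimage is unique, which is precisely the statement of the corollary.

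The main obstacle is obtaining a genuinely uniform $\epsilon$. Pointwise, the normal form gives 2-to-1 behavior only in an unspecified neighborhood; to upgrade this to a uniform $\epsilon$-tube around $\Sigma_E$ one needs compactness of $\Sigma_E$ together with continuity of the constants in the normal form, plus the observation that $\epsilon$ must additionally be smaller than $T_{\min}/2$ so that $(q,p)$ and $\Phi^t(q,p)$ remain distinct and the involution $\sigma$ has no extra fixed points. A complementary check is that $\wt\Theta$ is injective on each of $(0,\epsilon)\times\Sigma_E$ and $(-\epsilon,0)\times\Sigma_E$ separately for $\epsilon$ small; this follows from the fact that the transverse derivative $\partial_t \Theta^t|_{t=0} = \tfrac12\Xi_H$ is nonvanishing on $\Sigma_E$ (since $dH\neq 0$ there by convexity), so the time parameter $|t|$ is determined by the distance of $(x,\xi)$ to $\Sigma_E$ for small $|t|$.
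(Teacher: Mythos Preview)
Your proposal is correct and follows essentially the same approach as the paper: invoke the fold normal form from Section \ref{FOLDBACK} (via Proposition \ref{FOLDPROP}) and identify the canonical involution with the orientation reversal $\sigma(t,q,p)=(-t,\Phi^t(q,p))$. The paper's own argument is in fact just a one-line appeal to the fold normal form together with the identification of $\sigma$; your version supplies the compactness argument for a uniform $\epsilon$ and the auxiliary check $\epsilon<T_{\min}$, which the paper leaves implicit.
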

 
The statement follows from the local description of fold singularities in Section \ref{FOLDBACK}. 
 As noted above, the canonical involution is  orientation reversal,  $\sigma (t, q, p) = (-t, \Theta^t(q, p))$, i.e.  $\Theta^t(q, p) = \half((q, p) + \Phi^t(q, p)) =  \Theta^{-t}(\Phi^{t}(q, p)) = \half(\Phi^t(q, p) + \Phi^{-t} \Phi^t(q, p))  $

On the other hand, if $\Sigma_E$ is convex, then there always exists at least one closed orbit of the Hamiltonian flow on $\Sigma_E$ if $H$ is convex \cite{Rab,W}. If $(q,p)$ is a periodic point of period $T$,then $\iota_{\Psi}(0, q, p) = \iota_{\Psi}(T, q, p)$. 
In this case, $\iota_{\Psi}^{-1}(x, \xi) = \{(k T, q p)\}$ has a discrete inverse image. But it is also possible to have an inverse image of
positive dimension, for instance if $V(x)= ||x||^2$ and $(x, \xi) = (0,0)$ for instance if periodic orbits come in positive dimensional 
families. Another such fiber occurs at a non-period of the isotropic oscillator:  at time $T = \pi$, $\Phi^T(q,p) = - (q, p)$ and so 
$\iota_{\Psi}^{-1}(0,0) = \{T\} \times \Sigma_E. $

\begin{rem} The phase function $\Psi_E$ of \eqref{PSIc2} is equivalent to the phase function $\wt \Psi_E$ in the sense of \cite{HoIII, GSt},
namely they both parametrize the same Lagrangian. \end{rem}
 
\section{\label{MALGRANGE}  Proof of Theorem \ref{T:SMOOTH-Prop}}
The proof of Theorem \ref{T:SMOOTH-Prop} is based on a special case of the Malgrange preparation theorem 
which was proved earlier  by  Chester-Friedmann-Ursell \cite{CFU}.
The Malgrange preparation theorem asserts that if $f(t, x)$ is a smooth function of $(t, x)  \in \R \times \R^n$, and if $k$ is the 
positive integer so that 
\begin{equation} \label{COND} f(0,0) = 0, \frac{\partial}{\partial t} f(0,0) =0, \;\; \cdots, \frac{\partial^{k-1}}{\partial t^{k-1}} f(0,0) = 0, 
 \frac{\partial^{k}}{\partial t^{k}} f(0,0) \not= 0,\end{equation}
then there exists a smooth function $c(t, x)$ which is non-vanishing near $(0,0)$ and smooth functions $a_j(x), j=1, \dots, k-1$
so that $$f(t, x) = c(t, x) \left(t^k + a_{k-1}(x) t^{k-1} + \cdots a_0(x) \right). $$
We only apply this theorem when $k=3$ and use a stronger form from \cite{CFU, L61} and also proved in detail in 
\cite[Theorem 7.5.13]{HoI} (see also \cite[Page 444]{GSt}). With no loss of generality assume that $\frac{\partial^3 f}{\partial t^3}(0,0) > 0$. Then, there exists a $C^{\infty}$ function  $T = T(t, x)$
near $(0,0)$ with $T(0,0) =0,  \frac{\partial T}{\partial t}(0,0)  > 0$ and $C^{\infty}$ functions $\rho, \mu$ near $(0,0)$ such that
\begin{equation} \label{CFUNF} f(t, x)=  \frac{T^3}{3} + a (x) T + \mu(x), 
\end{equation}
with $a(0)= 0, \mu(0) = f(0, 0)$.

The detailed statement of the result  is given in \cite[Theorem 7.7.19]{HoI}.  
\begin{prop} \label{HORMPROP} Let $f(t, x)$ be real valued on $\R^{n+1}$ satisfying \eqref{COND} with $k=3$. Then 
there there exist $C^{\infty}$ functions $u_{0, \nu}, u_{1, \nu}$ such that, in the notation above,

$$\begin{array}{lll} \int u(t, x) e^{i \tau f(t, x)} dt & \sim &  e^{i \tau b(x)} \tau^{-\frac{n}{2}} 
\left( \tau^{-\frac{1}{3}} \rm{Ai}(\tau^{\frac{2}{3}} a (x)) \sum_{\nu=0}^{\infty} u_{0, \nu}(x) \tau^{-\nu} \right)+ \\&&\\
& + &  e^{i \tau b(x)} \tau^{-\frac{n}{2}} \left( \tau^{-\frac{2}{3}} \rm{Ai}'(\tau^{\frac{2}{3}} a (x)) \sum_{\nu=0}^{\infty} u_{1, \nu}(x) \tau^{-\nu} \right). \end{array} $$

\end{prop}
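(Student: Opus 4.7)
The plan is to reduce the integral to the model Airy integrals via the Chester-Friedman-Ursell normal form \eqref{CFUNF}, then to extract successive terms of the expansion by an iterated Malgrange-style division combined with one integration by parts, and finally to Borel-sum to produce the smooth coefficients $u_{0,\nu}, u_{1,\nu}$.

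First I would insert a smooth cutoff supported near the triple critical point $t=0$; outside the cutoff the phase is non-stationary in $t$ and the contribution is $O(\tau^{-\infty})$ by repeated integration by parts. On the support of the cutoff, the CFU representation \eqref{CFUNF} gives a smooth $T=T(t,x)$ with $T(0,0)=0$, $\partial_t T(0,0)>0$, and with $f(t,x)=T^3/3+a(x)T+\mu(x)$; setting $b=\mu$ and changing variables from $t$ to $T$ converts the integral into
\[
e^{i\tau b(x)} \int \tilde u(T,x)\, e^{i\tau(T^3/3 + a(x) T)}\, dT, \qquad \tilde u(T,x) = u(t(T,x),x)\,(\partial t/\partial T).
\]

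The core step is the division identity
\[
\tilde u(T,x) = g_0(x) + T\, g_1(x) + (T^2 + a(x))\, W(T,x),
\]
where $g_0(x),g_1(x)$ are the unique values making the right-hand side agree with $\tilde u$ at the two roots $T=\pm\sqrt{-a(x)}$ of $\partial_T(T^3/3+aT)$, and $W$ is the smooth remainder furnished by the $C^\infty$ Malgrange division theorem (\cite[Theorem 7.5.5]{HoI}). Substituting and rescaling $T=\tau^{-1/3}s$ in the first two terms produces, via the classical integral representations of $\Ai$ and $\Ai'$, exactly the leading $\tau^{-1/3}\Ai(\tau^{2/3}a(x))$ and $\tau^{-2/3}\Ai'(\tau^{2/3}a(x))$ contributions, with amplitudes $2\pi g_0(x)$ and $-2\pi i\, g_1(x)$ respectively. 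The remainder term is treated by writing $(T^2+a)e^{i\tau(T^3/3+aT)}=(i\tau)^{-1}\partial_T e^{i\tau(T^3/3+aT)}$ and integrating by parts in $T$; this returns an oscillatory integral of the same shape with amplitude $-(i\tau)^{-1}\partial_T W$, smaller by one full power of $\tau^{-1}$.

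Iterating division and integration by parts generates new smooth contributions to $u_{0,\nu},u_{1,\nu}$ at each round with one higher $\nu$, and a standard Borel summation then assembles these into genuine smooth functions realizing the full asymptotic series; the tail estimates on the Airy integrals finish the asymptotic relation. If the statement is interpreted with $x$ also integrated (as the prefactor $\tau^{-n/2}$ suggests), an additional stationary phase in the variables transverse to $T$ — non-degenerate by the fold analysis of Proposition \ref{FOLDPROPintro} — supplies the $\tau^{-n/2}$ factor without disturbing the Airy structure, since $a(x)$ depends smoothly on the critical point. The main technical obstacle is smoothness of the division across the fold locus $\{a(x)=0\}$: there the two roots $\pm\sqrt{-a(x)}$ coalesce, and a naive pole-type division fails, so one must invoke the $C^\infty$ (rather than analytic) Malgrange preparation, precisely as in \cite[Theorem 7.5.13]{HoI}.
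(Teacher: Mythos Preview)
Your proposal is correct and follows the standard Chester--Friedman--Ursell/H\"ormander approach: change variables via the CFU normal form \eqref{CFUNF}, divide the amplitude modulo $T^2+a(x)$ to extract the Airy and $\Ai'$ contributions, and iterate the integration-by-parts step to generate the full series. This is precisely the argument in \cite[Theorem~7.7.18]{HoI}.

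However, you should be aware that the paper does not supply its own proof of Proposition~\ref{HORMPROP}: it is stated as a known result and referred to \cite[Theorem~7.7.19]{HoI} for details. So there is nothing to compare against beyond noting that your sketch reproduces the standard proof in H\"ormander. Your observation about the anomalous $\tau^{-n/2}$ factor is also apt: as written, the one-dimensional $dt$ integral should not carry this factor, and indeed the paper's own use of the result (after reducing to a single $t$ variable in Section~\ref{ONEDRED}) is consistent with the pure one-variable version; the $\tau^{-n/2}$ belongs to the multi-variable Proposition~\ref{HORMPROP2}, whose proof sketch in the paper explicitly first integrates out the non-degenerate directions by stationary phase before invoking Proposition~\ref{HORMPROP} on the remaining $dt$ integral.
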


\begin{rem} Note that $T^2 + a(x) = 0$, so that $a(x) < 0$. In \cite[Page 444]{GSt}), the normal form is stated in the form,
$\frac{T^3}{3} - \rho(x) T + \mu(x). $ Hence, in what follows, $a(x) = - \rho(x)$.
\end{rem}

We apply the result to the oscillatory integral $W_{\hbar, E, \rho}(x,\xi) = \int_{\R^{3d}}  \int_{\R} \hat{\rho}(t) W_{\hbar, t}(x, \xi) e^{\frac{it E}{\hbar}} dq dp dv dt$ and to the one-dimensional reduction in Section \ref{ONEDRED}. 

 For the 
full $\R^{3d +1}$ dimensional integral, we use 






 \begin{prop} \label{HORMPROP2} Let $f(y, x)$ be complex  valued function with positive
imaginary part  on $\R^{n+m}$. Assume that  $\Im f (0,0)=0$, that $\rm{Hess} (\Im f)(0,0)$ is non-degenerate,  and that
\begin{itemize}
\item (i) \; $\Re f_x'(0,0) = 0. $

\item  (ii)\;  $\rm{Rank} \Re f_{xx}''(0,0) = n-1. $ 

\item (iii)\; $\langle X, \partial/\partial x \rangle^3 \Re f(0,0) \not= 0 $ if $0 \not= X \in \ker \Re f_{xx}''(0,0) $.

\item (iv) $f(x_1, 0)$ is real valued. 

\end{itemize}

Then there exist $C^{\infty}$ functions $\rho(x), \mu(x)$
near $(0,0)$ such that $\rho(0) = 0, \mu(0) = f(0)$ and such that, for any $u \in C_0^{\infty} $ with $\rm{Supp} u$ sufficiently close to $(0,0)$,
there exist $C^{\infty}$ fuctions $u_{0, \nu}, u_{1, \nu}$ such that,
$$\begin{array}{lll} \int u(y, x) e^{i \tau f(y, x)} dx  & \sim &  e^{i \tau \mu (y)} \tau^{-\frac{n}{2}} 
\left( \tau^{-\frac{1}{3}} \rm{Ai}(-\tau^{\frac{2}{3}} \rho(y)) \sum_{\nu=0}^{\infty} u_{0, \nu}(y) \tau^{-\nu} \right)+ \\&&\\
& + &  e^{i \tau \mu (y)} \tau^{-\frac{n}{2}} \left( \tau^{-\frac{2}{3}} \rm{Ai}'(- \tau^{\frac{2}{3}} \rho(y)) \sum_{\nu=0}^{\infty} u_{1, \nu}(y) \tau^{-\nu} \right). \end{array} $$

\end{prop}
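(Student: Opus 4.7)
The plan is to deduce Proposition \ref{HORMPROP2} from its real-valued prototype Proposition \ref{HORMPROP} by peeling off the $n-1$ transverse directions via Hörmander's stationary phase for positive complex phases, leaving a one-dimensional integral to which the Chester-Friedman-Ursell normal form applies directly. Hypotheses (i)--(iii) isolate a one-dimensional kernel direction of cubic degeneracy inside $\R^n$, while (iv), combined with $\Im f \ge 0$, arranges for the effective one-dimensional phase to be real-valued.

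Concretely, choose linear coordinates so that $\ker \Re f''_{xx}(0,0) = \mathrm{span}\{\partial/\partial x_1\}$ and write $x = (x_1, x')$ with $x' \in \R^{n-1}$. Hypothesis (ii) gives that $\Re f''_{x'x'}(0,0)$ is invertible, and combined with positive-semi-definiteness of $\Im f''_{x'x'}(0,0)$---which follows from $\Im f \ge 0$ attaining its minimum $0$ at the origin---this places $f''_{x'x'}(0,0)$ in the regime of Hörmander's stationary phase for positive complex phases (cf.\ \cite[Theorem 7.7.12]{HoI}). One obtains a unique almost-analytic critical point $x'_c(y, x_1)$ of $x' \mapsto f(y, x_1, x')$ and an asymptotic expansion
\begin{equation*}
\int u(y, x_1, x') e^{i\tau f(y, x_1, x')}\, dx' \sim \tau^{-(n-1)/2} e^{i\tau \tilde f(y, x_1)} \sum_{j \ge 0} \tau^{-j} \tilde u_j(y, x_1),
\end{equation*}
with $\tilde f(y, x_1) := f(y, x_1, x'_c(y, x_1))$. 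The cubic-degeneracy hypotheses transfer cleanly: by (i), (ii), (iii) one has $\partial_{x_1}\tilde f(0,0) = \partial_{x_1}^2\tilde f(0,0) = 0$ and $\partial_{x_1}^3 \tilde f(0,0) \ne 0$. Applying Proposition \ref{HORMPROP} to the reduced one-dimensional integral $\int \tilde u_j e^{i\tau \tilde f}\, dx_1$ produces the Chester-Friedman-Ursell normal form $\tilde f(y, x_1) = T^3/3 - \rho(y) T + \mu(y)$ with $\rho(0)=0$, $\mu(0)=f(0,0)$, and the Airy expansion in the stated form, the prefactor $\tau^{-(n-1)/2}$ combining with the Airy asymptotics to give the claimed $\tau^{-n/2}$.

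The main obstacle is verifying that $\tilde f$ is actually real-valued, so that Proposition \ref{HORMPROP} applies: although (iv) makes $f$ real on the real slice $\{x'=0\}$, the critical point $x'_c$ is generically complex. The cleanest route is to use Malgrange preparation together with the positivity of $\Im f$ to write $\Im f(y, x_1, x') = \langle A(y, x_1, x') x', x' \rangle$ with $A(y, x_1, 0)$ positive semi-definite, so that $\Im f$ vanishes to second order along $\{x' = 0\}$. Consequently $\partial_{x'} \Im f(y, x_1, 0) \equiv 0$, so the critical equation $\partial_{x'} f = 0$ has real data at $x'=0$; a formal Taylor expansion of $f$ around its critical point then shows that $\tilde f(y, x_1) - f(y, x_1, 0)$ is a formal series in the quantities $\partial_{x'}^\alpha f(y, x_1, 0)$ whose imaginary contributions cancel order by order. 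The remaining conceptual work---matching sign conventions so that $-\tau^{2/3}\rho(y)$ is the Airy argument in the oscillatory (caustic) regime, and tracking the compact support of $u$ through the reduction---is routine and parallels the real case already established in Proposition \ref{HORMPROP}.
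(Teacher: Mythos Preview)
Your proposal is correct and follows essentially the same route as the paper: split off the one-dimensional kernel direction $x_1$, apply stationary phase for positive complex phases in the remaining $x'=(x_2,\dots,x_n)$ variables, and then feed the resulting one-dimensional integral into Proposition~\ref{HORMPROP}. The paper's sketch differs from yours only in bookkeeping: it first solves $\partial_{x_j}\Re f=0$ for $x_j=X_j(x_1,y)$ and shifts $x_j\mapsto x_j-X_j$, so that in the new coordinates $x'=0$ is a real critical point of $\Re f$; then (iv) together with $\Im f\ge 0$ forces $\Im f(y,x_1,0)=0$ and $\partial_{x'}\Im f(y,x_1,0)=0$, so $x'=0$ is in fact a genuine real critical point of the full complex phase and the reduced phase $f(y,x_1,0)$ is manifestly real. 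This sidesteps your almost-analytic critical point and the formal-series cancellation argument, which is the one place your write-up is hand-wavy; the paper's change of variables makes the reality of $\tilde f$ immediate rather than something to be argued order by order.
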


The proof is similar to that of 
 \cite[Theorem 7.7.18]{HoI}  and also in the reduction to one integral. One integrates first in  $(x_2, \dots, x_n)$ where stationary phase applies, then applies Proposition \ref{HORMPROP} to the remaining one dimensional integral.  For the sake of completeness, we provide more details as 
in \cite[Theorem 7.7. 19]{HoI}, although it explicitly assumed there that the phase is real-valued. We label the coordinates so that $x_1 = t$ and $(q, p, v) = (x_2, \dots, x_n)$ 
with $n = 3d +1$. Let $f = \Re \Psi$. Define $X_j(x_1, y) $ for $j=2. \dots, n$ so that the equations $\frac{\partial f(x, y)}{\partial x_j}  = 0$ 
determine $x_j $ as functions $x_j = X_j(x_1, y)$. Change variables in the $dx_2 \cdots dx_n$ integral to
$x_j' = x_j - X_j$. Then the critical points of the $dx_2 \cdots d x_n$ integral become  $x_j = 0, j=2, \dots, n$. By applying stationary phase with a positive  complex phase to this integral, we obtain an asymptotic of the integrand 
with fixed $x_1$ of the form,
$$\tau^{- \frac{n-1}{2}} e^{i \tau f(x_1, 0)} U(x_1, y, 0). $$
Here we use that the imaginary part $\Im \Psi$  of the phase vanishes at the critical set.

To apply the results to our integrals, it is necessary to show that the hypotheses of Proposition \ref{HORMPROP} hold and then to  calculate the coefficients $\rho, \mu$ in \eqref{CFUNF} and the
principal coefficients $u_{0,0}, u_{1,0}$ in Proposition \ref{HORMPROP}.
The following general calculation is due to \cite[Section 3]{CFU}; see also \cite[(6.7), Page 440]{GSt}.
By \eqref{CFUNF}, and the convention that $\rho= -a$,  we may express  $\phi(t,x)  = \frac{T^3}{3} - \rho T + \mu. $ For each critical point
$d_t \phi(t,x)$ there exist two roots $t_{\pm}(x)$.  By $ \phi_c(t_{\pm},x)$ we denote the critical value of the
phase at each critical point. We then have,
\begin{lem} \label{rhoLEM}  In the notation above, $$\left\{ \begin{array}{l} \rho^{3/2}(x) =\frac{3}{4} ( \phi_c(t_-, x) - \phi_c(t_+, x) )\\ \\
\mu(x) = \frac{1}{2} ( \phi_c(t_-, x) + \phi_c(t_+, x) )\ . \end{array} \right. $$
\end{lem}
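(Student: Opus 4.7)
\medskip
\noindent\textbf{Proof plan for Lemma \ref{rhoLEM}.}
The plan is to work entirely inside the Chester--Friedman--Ursell normal form \eqref{CFUNF} and simply read off the two quantities $\rho^{3/2}(x)$ and $\mu(x)$ from the two critical values of the phase. Since the normal form change of variables $t \mapsto T = T(t,x)$ is a diffeomorphism with $\partial T/\partial t > 0$ at the origin, the critical points of $\phi$ with respect to $t$ correspond bijectively (and in an order-preserving way) to the critical points of $\frac{T^3}{3} - \rho(x) T + \mu(x)$ with respect to $T$.

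First I would solve the critical point equation in the $T$-variable. Differentiating the normal form gives $T^2 = \rho(x)$, so the two real critical points (present exactly when $\rho(x) \geq 0$) are $T_\pm = \pm \sqrt{\rho(x)}$. Since $\partial T/\partial t > 0$, the root $T_+ = \sqrt{\rho}$ corresponds to $t_+$ and $T_- = -\sqrt{\rho}$ corresponds to $t_-$. Substituting back into the normal form,
\begin{align*}
\phi_c(t_+, x) &= \tfrac{1}{3}\rho^{3/2}(x) - \rho(x)\sqrt{\rho(x)} + \mu(x) = -\tfrac{2}{3}\rho^{3/2}(x) + \mu(x),\\
\phi_c(t_-, x) &= -\tfrac{1}{3}\rho^{3/2}(x) + \rho(x)\sqrt{\rho(x)} + \mu(x) = +\tfrac{2}{3}\rho^{3/2}(x) + \mu(x).
\end{align*}

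Adding and subtracting these two identities gives the two claimed formulas: subtraction yields $\phi_c(t_-,x) - \phi_c(t_+,x) = \tfrac{4}{3}\rho^{3/2}(x)$, i.e.\ $\rho^{3/2}(x) = \tfrac{3}{4}\bigl(\phi_c(t_-,x) - \phi_c(t_+,x)\bigr)$, while addition yields $\phi_c(t_-,x) + \phi_c(t_+,x) = 2\mu(x)$. There is no real obstacle here beyond bookkeeping: the whole content of the lemma is that $\rho$ and $\mu$ are invariants of the unfolding of the cubic singularity, so they must be expressible in terms of the two critical values of $\phi$, and the normal form makes the dependence explicit. The only subtle point worth emphasizing in the write-up is the labeling convention $t_+ \leftrightarrow T_+$, which is forced by $\partial T/\partial t(0,0)>0$ and which is what makes the sign in $\rho^{3/2} = \tfrac{3}{4}(\phi_c(t_-) - \phi_c(t_+))$ correct (rather than its negative). \hfill $\square$
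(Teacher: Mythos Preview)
Your proof is correct and follows essentially the same approach as the paper: both arguments work directly in the CFU normal form, solve $T^2=\rho(x)$ for the two critical points $T_\pm=\pm\sqrt{\rho}$, evaluate the cubic at these points to obtain $\phi_c(t_\pm,x)=\mp\tfrac{2}{3}\rho^{3/2}+\mu$, and then add and subtract. Your explicit justification of the labeling $t_\pm \leftrightarrow T_\pm$ via the positivity of $\partial T/\partial t$ is a nice touch that the paper leaves implicit.
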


We review the proof, since we need to do the calculations.

\begin{proof} We assume that  $\phi(t,x) = \frac{T^3}{3} - \rho(x) T + \mu(x) $ is a smooth expression (here we write $a = - \rho$). The critical point equation in  $t$ 
(viewing $x$ as a parameter) is 
  $$\frac{\partial}{\partial t} \phi(t,x) = \frac{\partial }{\partial T} \phi(T,x) \frac{\partial T}{\partial t} =  \frac{\partial T}{\partial t}\left(T^2 - \rho(x) \right)= 0. $$
  Hence, since $\frac{\partial T}{\partial t}\neq 0$ for $t,x$ sufficiently close to $0$, we have $T^2 = \rho(x)$  on the critical point set, and there exist two roots  $T_{\pm} = \pm \sqrt{\rho(x)}$ when $T \not=0$. It follows that $\frac{\partial t}{\partial T} \not= 0$ when 
  $T \not= 0$. 
The critical values of the phase at the two critical points are given by, 
$$\phi_c(t_{\pm},x) = \frac{T_{\pm}^3}{3} - \rho(x) T_{\pm} + \mu(x) = \frac{\pm \rho^{\frac{3}{2}}}{3} - \rho(x) (\pm \rho^{\half})  + \mu(x)
= \mp \frac{2}{3} \rho^{\frac{3}{2}} + \mu(x).$$
 Thus,
  $$\phi_c(t_+, x) = - \frac{2}{3} \rho^{\frac{3}{2}} (x) + \mu(x),\;\; \phi_c(t_-, x) =  + \frac{2}{3} \rho^{\frac{3}{2}} (x) + \mu(x). $$


\end{proof}

\subsection{Completion of the proof of Theorem  \ref{T:SMOOTH-Prop}}

\subsubsection{Calculation of $\Psi_E''(t_c)$ and of $\Psi_E'''(t_c)$. }

\begin{proof}
When $t=0$ and $(x, \xi) \in \Sigma_E$, we have
$$\frac{\partial^2}{\partial t^2} \Psi_E = - \frac{d}{dt} H(\gamma_{t, x, \xi} (0)) = - d H(\gamma_{t, x, \xi}(0)) (J_{t, x, \xi}(0)) |_{t =0, (x, \xi) \in \Sigma_E}  $$
 In computing the time derivatives we may assume $(x, \xi) \in \Sigma_E$. But $ H(\gamma_{t, x, \xi} (0))$ has a local minimum when $t=0$ in that case. Indeed, for any $t$ such that   $(x, \xi) = \half (\gamma_{t, x, \xi}(0) + \gamma_{t, x, \xi}(t))$, it is necessary that  $H(\gamma_{t, x, \xi}(0)) \geq H(x, \xi) = E. $  It follows also that 
$\frac{d}{dt} (\Theta^t)^{-1}(x, \xi)|_{t=0} \in T \Sigma_E$, and that
\begin{equation} \label{LM}d H{(\gamma_{t, x, \xi} (0))} |_{t=0} = 0, \;\; (x, \xi) \in \Sigma_E. \end{equation}
We now show that  
$$\frac{\partial^3}{\partial t^3} \Psi_E |_{t=0} = - \frac{d^2}{dt^2} H(\gamma_{t, x, \xi} (0))  |_{t=0} \not= 0. $$
We have $\Theta^t(\gamma_{t, x, \xi}(0))
= (x, \xi)$ and $\frac{d}{dt} \Theta^t = \half \frac{d}{dt} \Phi^t = \half \Xi_H$. Then, $0 = \frac{d}{dt} \Theta^t(\gamma_{t, x, \xi}(0)) = \half \Xi_H(x, \xi)
+ D \Theta^t \frac{d}{dt} \gamma_{t, x, \xi} |_{t =0}$. If  the second term is  zero, we would get $\Xi_H(x, \xi) = 0$, a contradiction. In fact,
$D \Theta^t \frac{d}{dt}|_{t =0} = Id$ and we get $J_{0, x, \xi}(0) = - \half \Xi_H(x, \xi)$. But then, by strict convexity of $H$,
$$ \frac{d^2}{dt^2} H(\gamma_{t, x, \xi} (0))  |_{t=0 }= \rm{Hess}_{x, \xi} H (J_{0, x, \xi}(0) , J_{0, x, \xi}(0) ) > 0.$$
Note that the other term $d H_{(\gamma_{t, x, \xi} (0))} \frac{d^2}{dt^2} \gamma_{t, x, \xi}(0)  = 0$ by \eqref{LM}. This completes the proof that the conditions \eqref{COND} are satisfied.

We may therefore apply Proposition \ref{HORMPROP} to obtain an asymptotic expansion of the form stated in 
Theorem \ref{T:SMOOTH-Prop}. The remaining step is to use Lemma \ref{rhoLEM} to calculate the parameters
$\rho, \mu, u_{0,0}, u_{1, 0}$.

We therefore fix $(x, \xi) \in B_E$ sufficiently close to $\Sigma_E$ and define $(t, q, p) \in (-\epsilon_0, \epsilon_0)  \times \Sigma_E$ as above by $\Psi^{t}(q, p) = (x, \xi)$. Further, we denote by $t_{\pm}(E, x, \xi)$ the two roots
of the critical point equation for the phase $\Psi_E(t; x, \xi)$ (see Definition \ref{tDEF}).

\begin{lem} \label{ODDLEM} If $(x, \xi)$ is $\epsilon_0$-close to $\Sigma_E$, then there exist exactly two critical times $t = t_{\pm}(E, x, \xi) \in (-\epsilon_0, \epsilon_0)$ (see Proposition \ref{PROPPROP})  and they are of the form $t_{\pm} = \pm t$ where $\Theta^t(q, p) = (x, \xi), H(q, p) = E. $ Moreover, the action changes sign under the reversal of $\gamma_{t, x, \xi}$.  \end{lem}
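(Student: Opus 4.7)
The plan is to combine the fold structure of the midpoint map established in Proposition \ref{FOLDPROP} with the symmetry \eqref{SYM} of $\Theta$. By Proposition \ref{SMOOTHWEYLPROP}, the critical points in $t$ of the phase $\Psi_E(\cdot,x,\xi)$ are precisely the times $t$ for which there exists $(q,p)\in \Sigma_E$ with $\Theta^t(q,p)=(x,\xi)$, i.e., the preimages of $(x,\xi)$ under $\Theta:\R\times\Sigma_E\to B_E$.

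Corollary \ref{IFTLEM} supplies $\epsilon_0>0$ such that, for $(x,\xi)\in B_E$ within distance $\epsilon_0$ of $\Sigma_E$, the map $\Theta$ is exactly $2$-to-$1$ on $((-\epsilon_0,0)\cup(0,\epsilon_0))\times\Sigma_E$, the two preimages being exchanged by the canonical fold involution $\sigma(t,q,p)=(-t,\Phi^t(q,p))$ identified in the proof of Proposition \ref{FOLDPROP}; this is a direct consequence of the symmetry \eqref{SYM}. Hence if $(t_+,q,p)$ is one critical point with $t_+\ne 0$, then $(-t_+,\Phi^{t_+}(q,p))$ is a second, distinct critical point, and these exhaust the critical set in $(-\epsilon_0,\epsilon_0)$. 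This forces $t_-(E,x,\xi)=-t_+(E,x,\xi)$, consistent with Definition \ref{tDEF}.

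For the action-reversal claim, the cleanest route is a direct change of variables in \eqref{Stqp}. Setting $(q_r,p_r)=\Phi^r(q,p)$ and substituting $r=s+t_+$, autonomy of $H$ gives
\[
S(-t_+,\Phi^{t_+}(q,p)) = \int_0^{-t_+}\bigl(\dot q_{s+t_+}\cdot p_{s+t_+}-H\bigr)\,ds = \int_{t_+}^{0}\bigl(\dot q_r\cdot p_r-H\bigr)\,dr = -S(t_+,q,p).
\]
At the level of the full phase, the same conclusion follows from \eqref{PSIc2}: the symplectic area $\int_{\beta_{t,x,\xi}}\omega$ changes sign because the bounding curve $\hat\gamma_{t,x,\xi}$ reverses orientation under $\sigma$ (both the Hamiltonian arc $\gamma$ and the chord $\alpha$ are traversed in the opposite direction, while the midpoint $(x,\xi)$ is held fixed), and $t(E-H(\gamma_{t,x,\xi}(0)))$ is manifestly odd in $t$ because the energy $H(\gamma_{t,x,\xi}(0))=E$ is $\sigma$-invariant. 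Consequently $\Psi_E(\cdot,x,\xi)$ is odd on its critical set, so in particular $\Psi_E(t_-,x,\xi)=-\Psi_E(t_+,x,\xi)$, which is precisely the input needed for the Chester-Friedman-Ursell normal form in Lemma \ref{rhoLEM}.

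The main subtlety I anticipate is the bookkeeping behind the orientation reversal of $\beta_{t,x,\xi}$: one must track how the parameterizations of both the Hamiltonian arc and the chord transform under $\sigma$, and verify using the fixed-midpoint constraint $\tfrac12(\gamma(0)+\gamma(t))=(x,\xi)$ that the net effect on the bounded surface is a single global sign change. Once this orientation accounting is in place, the remaining assertions of the lemma follow from the calculations above.
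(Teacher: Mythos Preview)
Your proposal is correct and follows essentially the same approach as the paper. Both arguments invoke the fold/involution structure from Corollary \ref{IFTLEM} and Proposition \ref{FOLDPROP} for the existence and pairing $t_-=-t_+$, and both verify the sign change of the action by a direct substitution in the line integral: the paper reparametrizes via $s\mapsto t-s$ in $\int p\,dq$ along $\gamma^*(s)=\gamma(t-s)$, while you use the equivalent shift $r=s+t_+$ in $S(-t_+,\Phi^{t_+}(q,p))$; your additional remark that the full phase $\Psi_E$ is odd on the critical set is exactly what the paper uses afterward to conclude $\mu=0$.
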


Indeed, for such $(x, \xi)$ there is a unique Hamiltonian arc of time $t$ with $|t| \leq \epsilon$  (up to the symmetry \eqref{SYM}), for which $(x, \xi)$ is the midpoint of the chord with  endpoints $(q, p), (q_t, p_t)$. The two critical times points $(t, q, p)$ correspond to $\gamma_{t, x, \xi}$ and its orientation reversal.   Reversing the orientation means 
to change  $\gamma_{t, x, \xi}(s) \to \gamma^*_{t, x, \xi} (s): = \gamma_{t, x, \xi}(t -s)$, so that $\frac{d}{ds}  \gamma^*_{t, x, \xi} (s):  = - (\frac{d}{ds}  \gamma_{t, x, \xi}) (t-s) $.
We write $ \gamma_{t, x, \xi} (s) = (q_{t, x, \xi}(s), p_{t, x, \xi}(s))$ and $ \gamma^*_{t, x, \xi} (s)=
(q^*_{t, x, \xi}(s), p^*_{t, x, \xi}(s))  = (q_{t, x, \xi}(t-s), p_{t, x, \xi}(t-s))  $ and show that 
$$\int_0^t  p_{t, x, \xi}(s)) \frac{d}{ds}  q_{t, x, \xi}(s)) ds = - 
\int_0^t p^*_{t, x, \xi}(s)) \frac{d}{ds}  q^*_{t, x, \xi}(s)) ds. $$
Indeed, the right side equals 
$$\begin{array}{l} \int_0^t p_{t, x, \xi}(t- s)) \frac{d}{ds}  q_{t, x, \xi}(t-s)) ds  = - \int_0^t p_{t, x, \xi}(t- s))  \dot{q}_{t, x, \xi}(t-s)) ds\\ \\ =  \int_t^0 p_{t, x, \xi}(s))  \dot{q}_{t, x, \xi}(s)) ds
= -  \int_0^t p_{t, x, \xi}(s))  \dot{q}_{t, x, \xi}(s)) ds. \end{array}  $$

\subsubsection{Calculation of $\rho$} 
By the first equation of Lemma \ref{rhoLEM},  we need to calculate  the ``odd'' part of the  phase,
defined by $$\begin{array}{l} \half (\Psi_E(t_{+}(E, x, \xi); x, \xi) - \Psi_E(t_{-}(E, x, \xi); x, \xi)) \\ \\= \half (S_{t_{+}(E, x, \xi)} - S_{t_{-}(E, x, \xi)}) - \langle \half (q_{t_{+}(E, x, \xi)} - q_{t_{-}(E, x, \xi)}), \xi \rangle + (t_{+}(E, x, \xi) - t_{-}(E, x, \xi)) E, \end{array}$$
where $S_t(q, p) = \int_{\gamma_{t, q, p}} (pdq - H ds)$ is the action integral along the phase space trajectory
$\gamma_{t, q, p}$ with initial value $(q, p)$. The`` odd"  part of the action integral is 
$$\half (\int_0^{t_{+}(E, x, \xi)}  p dq - \int_0^{t_{-}(E, x, \xi)} p dq) = \half \int_{t_{-}(E, x, \xi)}^{t_{+}(E, x, \xi)} pdq, $$
 where the integral is taken along the Hamilton path with endpoints $(q_{t_-}, p_{t_-}), (q_{t_+}, p_{t_+})$
 where $t_{\pm} = t_{\pm}(E, x, \xi)$. 
Since  $H(q_s, p_s) = H(q, p)$, the second term of $S_t$  combines with 
$t E$ to produce $(t_{+}(E, x, \xi) - t_{-}(E, x, \xi))  (E - H(q, p)) = 0$ at the critical time. One also has, 
 $$\langle \half (q_{t_{+}(E, x, \xi)} - q_{t_{-}(E, x, \xi)}), \xi \rangle = 
 \half \int_{t_{-}(E, x, \xi)}^{t_{+}(E, x, \xi)} \xi \cdot dq. $$
  Hence,
 \begin{equation} \label{rho32} \begin{array}{l} \frac{4}{3} \rho^{3/2}=  \half \int_{t_{-}(E, x, \xi)}^{t_{+}(E, x, \xi)} (p - \xi) \cdot dq =    \int_{\beta_{t, x, \xi} } \omega, \end{array} \end{equation} where in the second inequality we use the observation 
  noted in \eqref{rhoform} and in the proof of Proposition \ref{SMOOTHWEYLPROP} that
 this action integral is the oriented area  of the surface bounded by the oriented closed curve $\hat{\gamma}_{t, x, \xi}$ 
 consisting of the Hamilton arc followed by the chord.
 
 

 \subsubsection{Calculation that $\mu = 0$}

We claim that $\mu =0$. This follows from Lemma \ref{ODDLEM}, since  the two critical times correspond to the 
unique Hamilton orbit and  its orientation reversal. The same is  true of $\langle q_t - q, \xi \rangle$, which is clearly 
odd under orientation reversal.

\subsubsection{\label{u00SECT} Calculation of $u_{0,0}$}
We claim that
\begin{equation}\label{MATCH}  u_{0,0} = \sqrt{\pi} \rho^{1/4} \left| \frac{dt_j}{dE} \det (1 + M_j(x, t_j(E))\right|^{-\half},  \end{equation}

This follows from  a standard matching result,  using the stationary phase asymptotics of the Airy function, and is given in detail in \cite[Page 459]{GSt}. 
We wish to match the  Airy asymptotic expansion \cite[Page 442, (6.9)]{GSt}, 
$$e^{i \tau \sigma} \{\frac{g_0}{\tau^{1/3}} \rm{Ai}(- \tau^{2/3} \rho) +  \frac{g_1}{i \tau^{2/3}} \rm{Ai}'(-\tau^{2/3} \rho) \}$$ to the  stationary phase of the same integral for  for $\tau^{2/3} \rho >> 0$, given by 
 $$\frac{\tau^{-1/3} e^{i \tau \sigma}}{\sqrt{\pi} [\tau^{2/3} \rho]^{1/4}} \left(g_0 \cos (\frac{2 \tau \rho^{2/3}}{3} - \pi/4)) - 
 g_1 \rho^{1/2} \sin (\frac{2 \tau \rho^{2/3}}{3} - \pi/4) \right). $$
 See  \cite[Page 443, (6.15)]{GSt}.  The coefficients of the expansions match if and only if (in an obvious notation)  $$\rm{Airy} \;\; g_0 \iff \rm{StPh} \;\; \frac{1}{\sqrt{\pi}} \rho^{-1/4} g_0. $$
 
Note that  $\tau^{-1/3}[ \tau^{-2/3} ]^{1/4} = \tau^{-1/3 - 1/6} = \tau^{-\half}.$ The stationary phase calculation uses that,
 $$\rm{Ai}(t) \simeq \frac{1}{\sqrt{\pi} t^{1/4}} \cos (\frac{2}{3} t^{3/2} - \frac{\pi}{4}). $$


For stationary phase points very close to $\Sigma_E$ one gets by Proposition \ref{T:SMOOTH-Prop2} the amplitude 
$$ \frac{2^{d+1}}{\sqrt{2 \pi \hbar}} \sum_j e^{- \epsilon t_j/\hbar} A_j(x, E) \cos \left( \frac{S_j(x, E)}{\hbar} + m_j\right), $$
and applying this to the critical points $(t, q, p)$ corresponding to $(x, \xi)$ close to $\Sigma_E$ gives that 
$$g_0 = \left| \frac{dt_j}{dE} \det (1 +M_j(x, t_j(E))^{-1} \right|^{\half},$$
proving \eqref{MATCH} and 
concluding the proof of Theorem  \ref{T:SMOOTH-Prop}.

\end{proof}


  \section{\label{ISOHO} Example: Isotropic harmonic oscillator. }
  In this section, we evaluate all of the objects above in the simplest case of the isotropic harmonic oscillator,
 $H(q, p) = \half (||p||^2 + ||q||^2)$ and check the consistency of \eqref{WAIRY} and \eqref{AIRY}.

  The first simplifying feature is that the Hamilton flow $\Phi^t$, resp. the midpoint map $\Theta^t = \half (I + \Phi^t)$ are linear; they are
  given respectively by,
\begin{equation} \label{Phit} M_t = \Phi^t = \bigoplus  \begin{pmatrix} \cos  t &  \sin  t \\ &\\
- \sin t  &  \cos t \end{pmatrix}, \;\; \Theta^{-t} = 
\bigoplus \frac{2}{1 + \cos  t }    \begin{pmatrix}  (1 + \cos   t ) & -   \sin t \\ &\\
 \sin t  & (1 + \cos t ),\end{pmatrix}. \end{equation} 
 i.e. 
 $(q_t, p_t) = ( \cos t q + \sin t p, - \sin t q + \cos t p)$ and on the critical point set of the $(q, p, v)$ integral,
 \begin{equation} \label{pqHO}  \begin{pmatrix}q \\ \\ p \end{pmatrix} = \Theta^{-t} (x, \xi) = \frac{2}{1 + \cos t} \begin{pmatrix} (1 + \cos t) x - \sin t \xi \\ \\
 \sin t x + (1 + \cos  t) \xi \end{pmatrix}.\end{equation}

Given $(x, \xi)$ we can solve for the critical time $t = t(x, \xi, E)$ at which there exists $(q,p) \in \Sigma_E$
for which $(x, \xi) $ is the midpoint of the Hamilton orbit of time $t$ starting at $(q,p)$. Namely, 
$$\cos \frac{t}{2} = \frac{ \sqrt{H(x, \xi)} }{ \sqrt{H(q, p)}} = \sqrt{s}, \;\; (H(q, p) = E)$$
where in the second equality we use  the notation $ s =  \frac{H(x, \xi)}{E} \in (0,1]$ in \eqref{BFORMULA}. 
 There are two solutions $t = t_{\pm}(E, x, \xi)$ with $t_- = - t_+$.
 
 \subsubsection{Simple geometric formulae for spheres} Let $C_R$ be the circle of radius $R$ around the origin. Since the Hamiltonian orbits of an isotropic oscillator are  great circles  of the sphere
of radius $\sqrt{2 E}$, we may regard them as circular arcs on $C_{\sqrt{2 E}}$. The travel time $t$  of the Hamiltonian flow $\Phi^t$ on this arc 
has length $R t = \sqrt{2 E} \;t$.

The chord between endpoints of the Hamilton arc of time $t$ has length $\half \sqrt{R^2 - r^2}$ where $r$ is the distance from the origin
to the midpoint of the chord. Since $(x, \xi)$ is the midpoint of the chord, $r = \sqrt{2 H(x, \xi)}$ and the length of the chord is
$\half \sqrt{2 E - 2 H(x, \xi)}$.  The area of the circular sector bounded by the Hamilton arc and the chord is given by,
$$A = R^2 \cos^{-1} (\frac{r}{R}) - r R \sqrt{1- \frac{r^2}{R^2}} = (2 E) \cos^{-1} \sqrt{\frac{H(x, \xi)}{E}} - 2 \sqrt{H(x, \xi) E} \sqrt{1 - \frac{H(x, \xi)}{E}}. $$
In the 
In the notation of  \eqref{BFORMULA} with $\sqrt{s} = \frac{r}{R}$, and comparing with \eqref{rho32},
$$A = R^2 \left( \cos^{-1} \sqrt{s} - \sqrt{s} \sqrt{1 - s} \right) = \frac{4}{3} \rho^{3/2}. $$


On the other hand,  $\beta(s) := \half [\cos^{-1} \sqrt{s} - \sqrt{s - s^2}],  $ so $A = 2 R^2 \beta$ and 
$B^2 = -[ (2 R^2)^{-1} A ]^{\frac{2}{3}}$. We recall that $\hbar = \hbar_N(E):=\frac{E}{N+\frac{d}{2}}$ and that the argument of 
the Airy function in \eqref{WAIRY} is  $\hbar_E^{-2/3}B^2\lr{s}$. We cannot and do not use the same convention for a general Hamiltonian, so that the factor $[(2 R^2)^{-1} ]^\frac{2}{3}  = E^{-\frac{2}{3}}$ accounts for the change in the  Airy argument between $\hbar_E^{-2/3}B^2\lr{s}$ in  \eqref{WAIRY} and $\hbar^{-\frac{2}{3}} \rho(x, \xi))$
\eqref{AIRY}.

\subsection{Simplified asymptotics}

For $(x, \xi)\in T^*\R^d$ define the rescaled variable $u=u(x,\xi)$ centered at the energy surface $\Sigma_E$ by
\begin{equation} \label{TUBE} \lr{\norm{x}^2+\norm{\xi}^2}/2 =E+ u  \lr{\hbar/2E}^{2/3}.\end{equation} 
 It is proved in \cite{HZ20} that,  for $\abs{u}<\hbar^{-1/3}$, 
\begin{equation}
  \label{E:W-scaling}
 W_{\hbar, E}(x, \xi)=
  \begin{cases}
 \frac{2}{(2\pi \hbar)^d} \lr{\frac{\hbar}{2E}}^{1/3} \lr{\Ai(u/E) + O\lr{(1+\abs{u})^{1/4}u^2\hbar^{2/3}}},&\qquad u<0\\
 \frac{2}{(2\pi \hbar)^d} \lr{\frac{\hbar}{2E}}^{1/3} \Ai(u/E)\lr{1 + O\lr{(1+\abs{u})^{3/2}u\hbar^{2/3}}},&\qquad u>0
  \end{cases}
\end{equation}
The simplified asymptotics follow by Taylor expansion of $\beta$ in   \eqref{BFORMULA} and of the
amplitude $u_{0,0}$ denoted in \cite{HZ20} by 
\begin{equation}
 \label{E:FW_alpha0} \alpha_0(s) =
 \begin{cases}
s^{(1-\alpha)/2} \frac{\sqrt{2|B(s)|}}{(1-s)^{1/4}s^{3/4}},&\quad 0<s<1\\
s^{(1-\alpha)/2} \frac{\sqrt{2B(s)}}{(s-1)^{1/4}s^{3/4}},&\quad s\geq 1
 \end{cases}.
\end{equation}
Namely, the Taylor expansions are,
 \begin{equation}
B^2(1+t)=2^{-2/3}t(1+ O(t)),\qquad \alpha_0(1+t)=2^{1/3}(1+O(t)).\label{E:alpha-B-exp}
\end{equation}
These estimates yield
\begin{equation}
\nu^{2/3}B^2\lr{s}= \lr{\frac{2}{\hbar_E}}^{2/3}B^2(1+(u/E)\hbar_E^{2/3})= \frac{u}{E} + O\lr{\abs{u}^2\hbar_E^{2/3}},\label{E:B-exp}
\end{equation}
which leads in \cite{HZ20} to,
\[\nu^{-1/3}\Ai(\nu^{2/3}B^2(s))\alpha_0(s)=\hbar_E^{1/3}\left[\Ai(u/E)+O((1+\abs{u})^{1/4}u^2\hbar_E^{2/3})\right]\]
when $u<0$ and 
\[\nu^{-1/3}\Ai(\nu^{2/3}B^2(s))\alpha_0(s)=\hbar_E^{1/3}\Ai(u/E)\left[1+O((1+\abs{u})^{3/2}u\hbar_E^{2/3})\right]\]
when $u>0.$

\end{document}